\newtheoremstyle{theorem}{1em}{1em}{\slshape}{0pt}{\bfseries}{.}{ }{}
\theoremstyle{theorem}
\newtheorem{theorem}{Theorem}
\newtheorem*{theorem*}{Theorem}
\newtheorem{corollary}[theorem]{Corollary}
\newtheorem{lemma}[theorem]{Lemma}
\newtheorem*{conjecture*}{Conjecture}
\newtheorem*{problem*}{Problem}
\theoremstyle{remark}
\newtheorem*{remark*}{Remark}
\newtheorem*{algorithm*}{Algorithm}
\providecommand{\N}{\mathbb{N}}
\providecommand{\Z}{\mathbb{Z}}
\providecommand{\Q}{\mathbb{Q}}
\newcommand{\maxfs}{\textsc{MaxFS}}
\newcommand{\budget}{b}
\newcommand{\eps}{\varepsilon}
\def\inlineshrink{-2mm} 
\begin{document}

\date{}

\title{Prizing on Paths:\\A PTAS for the Highway Problem}

\author{Fabrizio Grandoni\thanks{Computer Science Department,
University of Rome Tor Vergata, Roma, Italy, {\tt grandoni@disp.uniroma2.it}. Developed while visiting EPFL.} \and 
Thomas Rothvo\ss\thanks{Institute of Mathematics, EPFL, Lausanne, Switzerland, \tt{thomas.rothvoss@epfl.ch}} }


\maketitle

\begin{abstract}
\noindent In the \emph{highway} problem, we are given an $n$-edge line graph (the highway), and a set of paths (the drivers), each one with its own budget. For a given assignment of edge weights (the tolls), the highway owner collects from each driver the weight of the associated path, when it does not exceed the budget of the driver, and zero otherwise. The goal is choosing weights so as to maximize the profit. 
A lot of research has been devoted to this apparently simple problem.
The highway problem was shown to be strongly $\mathbf{NP}$-hard only recently [Elbassioni,Raman,Ray-'09]. The best-known approximation is $O(\log n/\log\log n)$ [Gamzu,Segev-'10], which improves on the previous-best $O(\log n)$ approximation [Balcan,Blum-'06]. Finding a constant (or better!) approximation algorithm is a well-known open problem in network design. Better approximations are known only for a number of special cases.

In this paper we present a PTAS for the highway problem, hence closing the complexity status of the problem. Our result is based on a novel randomized dissection approach, which has some points in common with 
Arora's quadtree dissection for Euclidean network design [Arora-'98]. The basic idea is enclosing the highway in a bounding path, such that both the size of the bounding path and the position of the highway in it are random variables. Then we consider a recursive $O(1)$-ary dissection of the bounding path, in subpaths of uniform optimal weight. Since the optimal weights are unknown, we construct the dissection in a bottom-up fashion via dynamic programming, while computing the approximate solution at the same time. Our algorithm can be easily derandomized.

We demonstrate the versatility of our technique by presenting PTASs for two variants of the highway problem: the \emph{tollbooth} problem with a constant number of leaves and the \emph{maximum-feasibility subsystem} problem on interval matrices. In both cases the previous best approximation factors are polylogarithmic [Gamzu,Segev-'10,Elbassioni,Raman,Ray,Sitters-'09].
\end{abstract}

\setcounter{page}{0}
\thispagestyle{empty}

\newpage

\section{Introduction}

Consider the following setting. We are given a single-road highway, which is partitioned into segments by tollbooths. The highway owner fixes a toll for each segment. A driver traveling between two tollbooths pays the total toll of the corresponding segments. However, if the total toll exceeds the budget of the driver, she will not use the highway (e.g., she will take a plane). Our goal is maximizing the profit of the highway owner. To that aim, we need to compromise between very low tolls (in which case all the drivers take the highway, but providing a small profit) and very high tolls (in which case no driver takes the highway, and the profit is zero). It is not hard to imagine other applications with a similar nature. For example, the highway segments might be replaced by the links of a (high-bandwidth) telecommunication network. 

The \emph{highway} problem formalizes the scenarios above. We are given an $n$-edge line graph $G=(V,E)$ (the \emph{highway}), and a set $\mathcal{D}=\{D_1,\ldots,D_m\}$ of $m$ paths in $G$ (the \emph{drivers}), each one characterized by a value $\budget_j\in \Q_{\geq 0}$ (the \emph{budgets}). For a given weight function $w:E\to \Q_{\geq0}$ (the \emph{tolls}) and a driver $D$, let $w(D):=\sum_{e\in D}w(e)$ be the \emph{weight} of $D$\footnote{Throughout this paper we confuse graphs with their set of edges: the meaning will be clear from the context.}. Our goal is choosing $w$ so as to maximize the following \emph{profit} function: \vspace{\inlineshrink}
$$
\sum_{j: w(D_j)\leq \budget_j}w(D_j). \vspace{\inlineshrink}
$$



Despite the simplicity of its formulation and its clear relation to applications, there is a huge gap between known approximation and inapproximability results for the highway problem. 
The problem was shown to be strongly $\mathbf{NP}$-hard very recently \cite{ERR09corr}. The best-known approximation factor is $O(\log n/\log\log n)$ \cite{GS10} (see also \cite{BB06ec}). A quasi-polynomial-time approximation scheme (QPTAS) is given in \cite{ESZ07esa}. This is a strong evidence of the existence of a PTAS for the problem. However, even finding a constant approximation is considered a challenging open problem in network design. For this reason, researchers focused on some relevant special cases \cite{BB06ec,BK06soda,GHKKKM05soda,HK05wads}.


\subsection{Our Results and Techniques}\label{sec:results}

In this paper we present a deterministic polynomial-time approximation scheme (PTAS) for the highway problem, hence closing the complexity status of the problem. To achieve our goal, we exploit a novel randomized dissection approach. 

The basic idea is as follows. Let $\eps>0$ be a small constant. Via simple reductions (see Section \ref{sec:preliminaries}), we can restrict ourselves to the case that optimal weights $w^*(e)$ are in $\{0,1\}$, and that the sum $W^*$ of the optimal weights along the highway is polynomially bounded in the number $n$ of edges. This introduces a $1-\Theta(\eps)$ factor in the approximation. 

The dynamic program is based on the following strategy. 
We consider all the subpaths $P$ of the highway, and guess the value $W\in \{0,1,\ldots,W^*\}$ of the sum of the optimal weights along $P$. Note that the number of pairs $(P,W)$ is polynomially bounded in $n$, due to the reductions above. 

We next restrict our attention to the drivers $\mathcal{D}(P):=\{D\in \mathcal{D}:\;D\subseteq P\}$ which are entirely contained in $P$, with the goal of approximating the corresponding optimal profit: The table entry for $(P,W)=(G,W^*)$ will eventually give the desired approximate solution.

If $W\leq \tilde{W}$, for a fixed constant $\tilde{W}$, we simply guess the $W$ edges where the optimum solution puts a weight of one. This provides the optimal profit for drivers in $\mathcal{D}(P)$. Assume now that $W>\tilde{W}$. In this case, by considering all the possible partitions $\overline{P}=\{P_1,\ldots,P_\gamma\}$  of $P$ in $\gamma$ subpaths, we can guess the partition where each $P_i$ takes a $1/\gamma$ fraction of the weight of $P$.
Here $\gamma\geq 2$ is a sufficiently large constant, depending on $\eps$.  Observe that the set $\mathcal{P}_\gamma(P)$ of such partitions has polynomial cardinality. Given $\overline{P}$, for the drivers included into some $P_i$ (i.e., in $\mathcal{D}(P_i)$), we account for the (previously computed) profit of table entry $(P_i,W/\gamma)$.

It remains to consider the profit of drivers $\overline{\mathcal{D}}(P)=\mathcal{D}(P)-\cup_{i=1}^{\gamma}\mathcal{D}(P_i)$ which are contained in $P$, but not in any $P_i$. This is also the crux of our method. Each driver $D\in \overline{\mathcal{D}}(P)$ consists of a (possibly empty) subset of consecutive subpaths $P_\ell,P_{\ell+1},\ldots,P_r$, plus two (possibly empty) subpaths $P_{left}$ and $P_{right}$, with $P_{left}\subset P_{\ell-1}$ and $P_{right}\subset P_{r+1}$. Observe that, if the budget of $D$ is not exceeded, then each \emph{middle} subpath $P_i$, $i\in \{\ell,\ldots,r\}$, contributes with an additive term $W/\gamma$ to the profit of $D$. In particular, this is independent from the way the weight $W/\gamma$ is distributed along $P_i$. 

The situation is radically different for the boundary subpaths $P_{left}$ and $P_{right}$: for them the profit can range from $0$ to $W/\gamma$, depending on the distribution of the weights along $P_{\ell-1}$ and $P_{r+1}$, respectively. In order to implement efficiently the dynamic programming step, we simply neglect the boundary subpaths. In other terms, we replace $D$ with the \emph{shortened} driver $D^s=D-(P_{left}\cup P_{right})$. At this point, we simply account $(r-\ell+1)W/\gamma$ for the profit of $D$, if this quantity does not exceed its budget, and zero otherwise. This way we obtain the overall profit for drivers in $\overline{\mathcal{D}}(P)$, and hence in $\mathcal{D}(P)$.

This approach has two opposite drawbacks:
\begin{itemize}\itemsep0pt
\item[1.] The profit computed might be too pessimistic. This is because we  do not consider the profit coming from $P_{left}\cup P_{right}$ (in particular, it might be $D=P_{left}\cup P_{right}$, and hence $D^s=\emptyset$).
\item[2.] The profit computed might be too optimistic. In fact, it might happen that the weight along $D^s$ is below the budget of $D$, while the weight along $D$ exceeds it (due to the weight on $P_{left}\cup P_{right}$). In that case we account for a positive profit, while the actual profit is zero.
\end{itemize}
We solve the second problem by restricting our attention to \emph{good} drivers $D\in \overline{\mathcal{D}}(P)$, i.e. drivers which contain $\Omega(1/\eps)$ many subpaths $P_i$. It is then sufficient to scale down all the weights at the end of the process by a factor $1-O(\eps)$ to ensure that the budget of good paths is not exceeded. 

Observe that this does not solve the first problem: indeed, it makes it even worse (since we consider less drivers, besides shortening them). At this point, randomization comes into play. We initially enclose the highway in a \emph{bounding path}. Both the length (i.e., the number of edges) of the bounding path and the position of the highway in it are random variables. To this instance we apply the approach above. Consider a driver $D$ which contributes to the optimal profit. For a proper choice of the random variables, with probability $1-O(\eps)$, $D$ is considered in the dynamic program for a path $P$ of weight $W$ such that the profit of $D$ is much larger than $W/\gamma$. Hence $D$ is good with probability close to one.
This introduces a factor  $1-O(\eps)$ in the approximation ratio.
%

As we will see, the domain of the random variables has polynomial size. Hence, the algorithm above can be easily derandomized by considering all the possible realizations.

We believe that our technique will find other applications, and hence it might be of independent interest. In order to motivate that, we show how to apply it to two related problems (see Section \ref{sec:extensions}):
\begin{itemize}\itemsep0pt
\item[$\bullet$] The \emph{tollbooth} problem is the generalization of the highway problem where the input graph $G$ is a tree (rather than a line). This problem is $\mathbf{APX}$-hard, and the best-known approximation for it $O(\log n/\log\log n)$ \cite{GS10}. Here we present a PTAS for the practically-relevant special case that $G$ has a constant number of leaves.

\item[$\bullet$] In the \emph{maximum-feasibility subsystem} problem we are given a set of vectors $a_1,\ldots, a_m\in \Q^n$ and a set of $m$ pairs $(\ell_j,u_j)$, with $0\leq \ell_j\leq u_j$ and $j=1,\ldots,m$. 
The goal is computing a vector $w\in \Q_{\geq0}^n$ such that the \emph{constraint} $\ell_j\leq a_j^T w\leq u_j$ is satisfied by the largest possible number of indexes $j$. Intuitively, the vectors $a_j^T$ can be interpreted as the rows of a matrix $A$: the product $A\,w\in \Q^m$ is what we wish to upper and lower bound. In this paper we restrict to the case that the vectors $a_j$ have entries in $\{0,1\}$, and the $1$'s appear consecutively (i.e., $A$ is an \emph{interval matrix}).

Elbassioni, Raman, Ray, and Sitters~\cite{ElbassioniRRS-SODA09} show that this problem is $\mathbf{APX}$-hard. Moreover, if we allow a violation of the lower and upper bounds by a factor $(1+\eps)$, then there is a polylogarithmic approximation algorithm running in polynomial time, and an exact algorithm running in quasi-polynomial time\footnote{The latter result is not a contradiction, since we compare to the optimum solution, which may not even slightly violate the inequalities}. Here we show how to obtain a $(1+\eps)$ approximation in polynomial time in the same framework.
\end{itemize}

\subsection{Related Work}\label{sec:related}

The highway problem was even not known to be $\mathbf{NP}$-hard until recently. For example, this is posed as an open problem by Guruswami et al. \cite{GHKKKM05soda}. Weakly $\mathbf{NP}$-hardness was shown via a reduction from \emph{partition} 
by Briest and Krysta \cite{BK06soda}. Very recently, Elbassioni, Raman, and Ray \cite{ERR09corr} proved strongly $\mathbf{NP}$-hardness via a reduction from \emph{max-2-SAT}. Balcan and Blum \cite{BB06ec} give a $O(\log n)$ approximation for the problem. Their algorithm partitions the paths in groups of different length. Then it applies a constant factor approximation algorithm in \cite{GHKKKM05soda} for the \emph{rooted} version of the problem, where all drivers contain a given node, to each group separately. The approximation was very recently improved to $O(\log n/\log\log n)$ by Gamzu and Segev~\cite{GS10}. Their algorithm, which also works for the more general tollbooth problem, combines the notion of tree separators with a generalization of the algorithm for the rooted case mentioned before. The QPTAS by Elbassioni, Sitters, and Zhang \cite{ESZ07esa} exploits the profiling technique introduced by Bansal et al.~\cite{BCES06stoc}. The basic idea is guessing the approximate shape of the cumulative weights to the left and right of a given edge. This allows one to partition the problem into two sub-problems, which can be solved recursively. 

There are better approximation results, all based on dynamic programming, for a number of special cases.
In \cite{BB06ec} a constant approximation is given for the case that all the paths have roughly the same length. 
An FPTAS is described by Hartline and Koltun \cite{HK05wads} for the case that the highway has constant length (i.e., $n=O(1)$). This was generalized to the case of constant-length paths in \cite{GHKKKM05soda}. In \cite{GHKKKM05soda} the authors also present an FPTAS for the case that budgets are upper bounded by a constant. An FPTAS is also known \cite{BB06ec,BK06soda} for the case that paths induce a laminar family\footnote{In a laminar family of paths, two paths which intersect are contained one in the other.}.  

The \emph{tollbooth} problem is the generalization of the highway problem where $G$ is a tree. A $O(\log n)$ approximation was developed in \cite{ERR09corr}. As already mentioned, this was very recently improved to $O(\log n/\log \log n)$~\cite{GS10}. The tollbooth problem is $\mathbf{APX}$-hard \cite{GHKKKM05soda}, and 
for general graphs it is $\mathbf{APX}$-hard even when the graph has bounded degree, the paths have constant length and each edge belongs to a constant number of paths \cite{BK06soda}.

The highway and tollbooth problems belong to the family of prizing problems with single-minded customers and unlimited supply. Here we are given a set of customers: Each customer wants to buy a subset of items (\emph{bundle}), if its total prize does not exceed her budget. In the highway terminology, each driver is a subset of edges (rather than a path).
For this problem a $O(\log n+\log m)$ approximation is given in \cite{GHKKKM05soda}. This bound was refined in~\cite{BK06soda} to $O(\log L+\log B)$, where $L$ denotes the maximum number of items in a bundle and $B$ the maximum number of bundles containing a given item. A $O(L)$ approximation is given in \cite{BB06ec}. On the negative side, Demaine et al. \cite{DFHS06soda} show that this problem is hard to approximate within $\log^d n$, for some $d>0$, assuming that $\mathbf{NP}\not\subseteq \mathbf{BPTIME}(2^{n^{\eps}})$ for some $\eps>0$.

The highway problem has some aspects in common with the well-studied \emph{unsplittable flow} problem on line graphs. In this problem we are given a line graph $G=(V,E)$, with edge capacities and a set of paths $D_j$, each one characterized by a demand and a profit. The goal is selecting a maximum profit subset of paths such that the sum of the demands of selected paths on each edge does not exceed the corresponding capacity. For the special case of unit capacities and demands, a $(2+\eps)$ approximation is given by Calinescu et al. \cite{CCKR02ipco}, improving on \cite{BBFNS01jacm,PUW00soda}. Under the \emph{no-bottleneck} assumption, the same approximation guarantee is achieved for the general case by Chekuri, Mydlarz, and Shepherd \cite{CMS07talg}, improving on an earlier constant approximation under the same assumption \cite{CCGK07algo}. Eventually, a QPTAS is given in \cite{BCES06stoc}. The QPTAS for the highway problem in \cite{ESZ07esa} exploits the same basic technique as in \cite{BCES06stoc}. Our hope is that, in turn, our PTAS for the highway problem will inspire a PTAS for the line-graph unsplittable flow problem. However, this seems to require some new ideas and we leave it as a challenging open problem.

For general $0/1$-matrices, the maximum-feasible subsystem problem (with no violation) is not approximable within $\Omega(n^{1/3-\eps})$ for any $\eps>0$ even for $\ell_j = u_j$, unless $\mathbf{ZPP}=\mathbf{NP}$~\cite{ElbassioniRRS-SODA09}. If each row of $A$ contains 
3 non-zero arbitrary coefficients, then even $n^{1-\eps}$ approximations are not possible
in polynomial time~\cite{GuruswamiRaghavendraSTOC07} (see also the previous hardness result \cite{FeigeMaxFShardness97}). The best-known  $O(n / \log n)$ approximation for this problem is due to Halld{\'o}rsson~\cite{Halldorsson00ApproximationsWeightedIS}.

The technique behind our PTAS resembles Arora's quadtree dissection for Euclidean network design \cite{A98jacm}. The basic idea there is enclosing the set of input points into a bounding box, then recursively partition it in a constant number of boxes. This dissection is then randomly shifted. On the resulting random dissection, one applies dynamic programming. We similarly enclose the highway in a bounding path, and partition the latter. Like in Arora's approach, the dissection is randomly shifted. Differently from that case and crucially for our analysis, the size of the bounding path is a random variable as well. Another major difference is that the dissection is not uniform with respect to input properties, but with respect to the optimal weights: for this reason the dissection is constructed in a bottom-up, rather than top-down, fashion via dynamic programming (while computing the approximate solution in parallel). 

\subsection{Preliminaries}\label{sec:preliminaries}

Let $OPT=(w^*,\mathcal{D}^*)$ be the optimum solution, where $w^*$ is the optimal weight function and $\mathcal{D}^*$ is the set of drivers $D_j$ such that $w^*(D_j)\leq \budget_j$. By $opt$ we denote the optimal profit. Our PTAS starts with a sequence of rounding steps to transform the input (and the optimum  solution) in a convenient form, while losing only a factor $1-2\eps$ in the approximation. Since these steps are rather standard, we discuss them here, while in Section \ref{sec:ptas} we will focus on the novel techniques introduced in this paper.

W.l.o.g. we assume $1/(2\eps)\in \N$ and $\eps\leq 1/2$. Let $\budget_{\max}$ be the largest budget. After scaling 
all budgets, one has $b_{\max} = m/\eps^2$. Observe that trivially $opt\geq \budget_{\max}$. 
First, we discard all drivers with a budget smaller than $1/\eps$. 
Next, we round down the budgets to the nearest integer. Any solution to the rounded instance
gives a feasible solution of the same value for the original instance. Moreover, the optimal solution to the rounded instance is a good approximation of the original optimum. In fact, $(w,\mathcal{D})$ with $\mathcal{D} := \{D_j\in\mathcal{D}^* \mid b_j \geq 1/\eps\}$ and $w(e) := \frac{w^*(e)}{1+\eps}$ is a feasible solution to the new instance since \vspace{-1mm}
\[
 w(D_j) = \sum_{e\in D_j} \frac{w^*(e)}{1+\eps} \leq \frac{b_j}{1+\eps} \leq \lfloor b_j\rfloor \vspace{-1mm}
\]
for any  $D_j\in\mathcal{D}^*$ with $b_j\geq1/\eps$. The profit of this solution is \vspace{-1mm}
\[
\sum_{D_j\in\mathcal{D}} w(D_j) \geq \sum_{D_j\in\mathcal{D}^*: b_j\geq1/\eps} \frac{w^*(D_j)}{1+\eps} \geq \frac{opt}{1+\eps} - \frac{m}{\eps} 
\geq (1-2\eps)opt. \vspace{-1mm}
\]
As observed in \cite{CCGK07algo}, the optimal weights for this instance can be assumed to be integral. In fact, given the optimal drivers $\mathcal{D}^*$, the corresponding optimal weights $w^*$ can be computed by solving an ILP whose $0$-$1$ constraint matrix is totally unimodular.
Since the largest weight in $w^*$ is trivially not larger than the largest budget (i.e. $m/\eps^{2}$ after rounding), we can conclude that $w^*:E\to \{0,1,\ldots,m/\eps^2\}$. By replacing each edge with a path of length $m/\eps^2$, we can further assume $w^*:E\to \{0,1\}$. Let $W^*=\sum_{e\in E}w^*(e)$ be the total weight of the solution, and $\gamma=(1/\eps)^{1/\eps}$. By adding $W^* \gamma$ dummy edges (not crossed by any driver), say, to the right of the highway, we can assume that $W^*=\gamma^{\ell}$ for some integer $\ell$ (in fact, the weight assigned to dummy edges is irrelevant). Observe that $W^* \leq n\,m\gamma/\eps^2$: hence we can guess the value of $W^*$ in polynomial-time.

We call an instance of the highway problem with the properties above \emph{well-rounded}. The discussion above implies the following lemma. \vspace{-2mm}
\begin{lemma}\label{lem:wellround}
For any $\eps>0$, there is a polynomial reduction from the highway problem to the same problem on well-rounded instances which is approximation-preserving modulo a factor $(1+\eps)$. 
\end{lemma}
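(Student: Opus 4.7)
The plan is to carry out, in order, the four rounding/reduction steps informally described in the paragraphs immediately preceding the lemma, verifying that each step individually costs at most a multiplicative $(1+O(\eps))$ factor in the achievable profit and that the composition remains polynomial-time. Throughout, I will compare the optimum $opt$ of the original instance against the value attainable on the transformed instance by exhibiting a concrete feasible transformed solution derived from $OPT=(w^*,\mathcal{D}^*)$.

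First I would scale all budgets so that $b_{\max}=m/\eps^2$; since weights may also be scaled by the same factor, this changes nothing about approximability. Then I would discard every driver with $b_j<1/\eps$, observing that even if every such driver were to achieve her full budget her contribution would be at most $m/\eps$, whereas $opt\geq b_{\max}=m/\eps^2$, so the discarded profit is at most an $\eps$-fraction of $opt$. Next I would round each surviving budget down to $\lfloor b_j\rfloor$; to bound the loss, I would exhibit the candidate transformed solution $w(e):=w^*(e)/(1+\eps)$, verifying both feasibility
\[
w(D_j)\leq \frac{b_j}{1+\eps}\leq \lfloor b_j\rfloor \quad\text{for }D_j\in \mathcal{D}^*\text{ with }b_j\geq 1/\eps,
\]
and that its profit is at least $opt/(1+\eps)-m/\eps\geq (1-2\eps)opt$, using $opt\geq m/\eps^2$ once more. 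This is precisely the chain of inequalities already sketched in the excerpt, so no new ingredients are needed here.

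Having reduced to integer budgets, I would invoke the observation of \cite{CCGK07algo} that, once the set $\mathcal{D}^*$ of satisfied drivers is fixed, the optimal weights are the solution of a linear program whose constraint matrix (rows indexed by drivers, columns by edges, entries in $\{0,1\}$ with the $1$-entries consecutive) is an interval matrix and hence totally unimodular; combined with integer right-hand sides this yields integral optimal weights. Since each optimal weight is bounded above by the largest budget $m/\eps^2$, I may then replace every edge by a path of $m/\eps^2$ parallel edges (extending each driver's path accordingly) and spread the integer weight across these copies, which forces $w^*:E\to\{0,1\}$ while preserving the instance polynomially. Finally I would append $W^*\gamma$ dummy edges not covered by any driver to the right end of the highway, pushing the total optimal weight up to the next power of $\gamma=(1/\eps)^{1/\eps}$; since the dummy edges lie on no driver's path, any weight we assign them is irrelevant to the profit and we may choose it to hit exactly $\gamma^{\ell}$.

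The only step requiring a genuine argument is the budget-rounding one, and the main (minor) obstacle is simply being careful about the two compounding $(1+\eps)$ losses (discarding small-budget drivers, and dividing weights by $1+\eps$); absorbing both into a single $(1-2\eps)$ factor and then, if desired, rescaling $\eps$ at the outset gives the advertised $(1+\eps)$-preservation. The polynomial size of the resulting instance follows because the edge blow-up factor $m/\eps^2$ and the dummy-edge count $W^*\gamma\leq n\,m\gamma/\eps^2$ are both polynomial in the input, and because $W^*$ itself can be guessed exactly by trying all $O(n\,m\gamma/\eps^2)$ candidate values and returning the best solution produced.
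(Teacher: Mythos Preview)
Your proposal is correct and follows essentially the same sequence of reductions that the paper lays out in Section~\ref{sec:preliminaries} (the paper in fact offers no separate proof, merely stating that ``the discussion above implies the following lemma''). The only cosmetic slip is the phrase ``path of $m/\eps^2$ parallel edges''; you clearly mean subdividing each edge into a path of $m/\eps^2$ edges in series, which is exactly what the paper does.
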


\vspace{-2mm}
\section{A PTAS for the Highway Problem}\label{sec:ptas}

From the discussion in Section \ref{sec:preliminaries}, we assume that the input instance is well-rounded. Let $\eps>0$ be a constant parameter, $\delta=1/(2\eps)\in \N$ and $\gamma=(1/\eps)^{1/\eps}$.
Our PTAS {\tt hptas} for the highway problem is described in Figure \ref{fig:ptas}.
\begin{Figure}[th]
\underbar{\bf Input:} Well-rounded highway instance $G=(V,E)$ and $(P_j,\budget_j)$, $j=1,2,\ldots,m$.

\noindent\underbar{\bf Output:} Edge weights $w:E\to \Q_{\geq0}$ 

\noindent\underbar{\bf Algorithm:} \vspace{\inlineshrink}
\begin{itemize*} \itemsep0pt
\item[\bf (B)] {\bf Bounding Phase:}
\begin{itemize}
\item[\bf (B1)] Guess the value of the total weight $W^* = \gamma^\ell$, $\ell\in\N$.
\item[\bf (B2)] Choose integers $x\in \{1,2,\ldots,W^*\}$ and $y\in \{1,2,\ldots,1/\eps\}$ uniformly at random. Attach a path of length $W^*\cdot((1/\eps)^{y}-1)-x$ (resp., $x$) to the right (resp., left) of $G$. Let $G_0$ be the resulting line graph, and $W'=W^*\cdot(1/\eps)^{y}$.
\end{itemize}
\item[\bf (D)] {\bf Dynamic Programming Phase:}
\begin{itemize}
\item[\bf (D1)] For every path $P\subseteq G_0$, \vspace{\inlineshrink}
$$
\phi(P,(1/\eps)^y)=\max_{\substack{w:P\to \{0,1\}\\w(P)=(1/\eps)^y}}\sum_{\substack{D_j\subseteq P,\\ w(D_j)\leq \budget_j}}w(D_j). \vspace{\inlineshrink}
$$
\item[{\bf (D2)}] For every path $P \subseteq G_0$, and for $W=W'/\gamma^q$, $q=\ell-1,\ell-2,\ldots,0$,\vspace{\inlineshrink}
$$
\phi(P,W)=\max_{\overline{P}\in \mathcal{P}_{\gamma}(P)} \Bigg\{ \sum_{i=1}^{\gamma} \phi\left(P_i,W/\gamma\right)  + \sum_{\substack{D_j\subseteq P,\\ n_j:=| \{ i : P_i \subseteq D_j\}|\geq \delta,\\ W/\gamma\cdot n_j\leq \budget_j}} W/\gamma\cdot n_j   \Bigg\}.\vspace{\inlineshrink}
$$
\end{itemize}
\item[\bf (S)] {\bf Scaling Phase:}
\begin{itemize}
\item[\bf (S1)] Derive $w':G_0\to \{0,1\}$ determining the value of $\phi(G_0,W')$.
\item[\bf (S2)] Output $w:E\to \Q_{\geq 0}$, where $w(e)=w'(e)\cdot \frac{\delta}{\delta+2}$. \vspace{\inlineshrink}
\end{itemize}
\end{itemize*}
\caption{PTAS for the highway problem. Here $\delta:=1/(2\eps)\in \N$ and $\gamma=(1/\eps)^{1/\eps}$.} \label{fig:ptas}
\end{Figure}

In the \emph{Bounding Phase} (B), we first guess the total optimal weight $W^*$ (Step B1). By guessing, we mean that we run the rest of the algorithm for every feasible choice of $W^*$ (which is a polynomially bounded integer). Then, we enclose the highway in a bounding path (Step B2). Both the length of the bounding path and the position of the highway are proper functions of two random variables $x$ and $y$. All the probabilities and expectations in this paper are with respect to the choice of those two variables.

In the \emph{Dynamic Programming Phase} (D), we compute the almost optimal profit $\phi(P,W)$ which can be obtained from the drivers in $P$ by placing $W$-many $1$'s along $P$. In the initialization step (Step D1), we compute profits $\phi(P,(1/\eps)^y)$ by brute force, considering all the $\binom{|P|}{(1/\eps)^y}$-many possible ways to place $(1/\eps)^y=O(1)$-many $1$'s on the edges of $P$.
In the dynamic programming step (Step D2), we consider the \emph{best} partition $\overline{P}=\{P_1,\ldots,P_\gamma\}$ of $P$ into $\gamma$ subpaths. The set of candidate partitions is denoted by $\mathcal{P}_{\gamma}(P)$. We first add to $\phi(P,W)$ the profits $\phi(P_i,W/\gamma)$ for each $i$. Then we consider the good drivers $D_j$, i.e. the drivers in $P$ which contain $n_j\geq \delta$ subpaths $P_i$. For each such driver, we increase $\phi(P,W)$ by the profit associated to the shortened driver $D^s_j=\cup_{P_i\subseteq D_j}P_i$, i.e. $W/\gamma\cdot n_j$, unless this quantity exceeds the budget $b_j$.

In the final \emph{Scaling Phase} (S), we derive from the dynamic programming table the weights $w'$ determining the value of $\phi(G_0,W')$ (Step S1). Then we restrict our attention to the edges of the (original) highway, and scale the corresponding weights down by $\frac{\delta}{\delta+2}$ (Step S2).

\section{Analysis}\label{sec:analysis}

To avoid any confusion, let $n$ and $\bar{n}$ denote the number of edges in the original and well-rounded instance, respectively. Recall that, for any constant $\eps$,  $\bar{n}$ is polynomially bounded in $n$ and $m$. 
\begin{lemma}\label{lem:time}
Algorithm {\tt hptas} runs in polynomial time.
\end{lemma}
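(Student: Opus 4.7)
The plan is to verify that every step of {\tt hptas} contributes only a polynomial factor to the overall running time, relying on the well-roundedness assumption to bound the key quantities. The crucial preliminary observation is that $W^{*}\leq \bar{n}\,m\,\gamma/\eps^{2}$, which is polynomial in $n$ and $m$ for any fixed $\eps$, and that $\gamma=(1/\eps)^{1/\eps}$ and $\delta=1/(2\eps)$ are constants when $\eps$ is treated as a constant. From here I would analyze each phase separately.

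For the Bounding Phase, I would note that $W^{*}=\gamma^{\ell}$ with $\ell=O(\log_{\gamma} W^{*})=O(\log n)$, so there are only $O(\log n)$ candidate values of $W^{*}$ to guess. The random (or, after derandomization, enumerated) pair $(x,y)$ ranges over a set of size $W^{*}\cdot (1/\eps)=\mathrm{poly}(n,m)$. Moreover, the extended line graph $G_{0}$ has at most $W^{*}\cdot(1/\eps)^{y}+|G|\leq W^{*}\gamma+\bar{n}$ edges, which is polynomial. Thus $G_{0}$ contains polynomially many subpaths $P$, a fact I will use throughout the dynamic program.

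For the Dynamic Programming Phase, the main point is that the exponents appearing are bounded by constants depending only on $\eps$. In Step D1, for each of the polynomially many subpaths $P$, we enumerate at most $\binom{|P|}{(1/\eps)^{y}}\leq |P|^{\gamma}$ weight placements, a polynomial quantity since $(1/\eps)^{y}\leq\gamma=O(1)$; for each placement, the associated profit is easily evaluated in polynomial time. In Step D2, there are only $\ell+1=O(\log n)$ relevant values of $W=W'/\gamma^{q}$, and for each pair $(P,W)$ the set $\mathcal{P}_{\gamma}(P)$ of partitions of $P$ into $\gamma$ subpaths has cardinality at most $\binom{|P|-1}{\gamma-1}=O(|P|^{\gamma-1})$, again polynomial. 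Inside the maximum, we read $\gamma$ previously computed table entries and iterate once over the $m$ drivers to count $n_{j}$ and check the budget, contributing only $O(\gamma+m)$ work per partition. The Scaling Phase is trivially polynomial.

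The main obstacle is strictly bookkeeping rather than conceptual: one must confirm that no quantity in the recursion is superpolynomial despite the fact that both the enlarged path $G_{0}$ and the parameter $W'$ can be a constant factor larger than $W^{*}$. Multiplying out the bounds, the total running time is at most $O(\log n)\cdot |G_{0}|^{2}\cdot\bigl(|G_{0}|^{\gamma}+\log n\cdot|G_{0}|^{\gamma-1}(\gamma+m)\bigr)\cdot W^{*}\cdot(1/\eps)$, which is polynomial in $n$ and $m$ for every fixed $\eps>0$. The derandomization, obtained by looping over all $O(W^{*}/\eps)$ realizations of $(x,y)$ and all $O(\log n)$ guesses of $W^{*}$, adds only another polynomial factor, yielding the claim.
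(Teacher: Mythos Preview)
Your proof is correct and follows essentially the same approach as the paper's own proof: bound the number of guesses for $W^{*}$, bound the number of subpaths $P$, bound the number of weight placements in Step~D1 by $O(|G_0|^{(1/\eps)^y})$, and bound the number of partitions in Step~D2 by $O(|G_0|^{\gamma-1})$. The paper's proof is terser (three sentences), whereas you spell out the size of $G_0$, the $O(\log n)$ count of levels, and the derandomization cost explicitly; none of this is a different idea, just more careful bookkeeping of the same polynomial factors.
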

\begin{proof}
Since $W^*$ is an integer bounded by $nm\gamma/\eps^2$, its value can be guessed by trying a polynomial number of values. For all the $O(\bar{n}^2)$ choices of $P$ in Step D1, the number of candidate functions $w$ to be considered is $O(\bar{n}^{(1/\eps)^y})$. In Step D2, for all the $O(\bar{n}^2)$ choices of $P$, there are $O(\bar{n}^{\gamma-1})$ possible choices for the $P_i$'s. The claim follows.
\end{proof}

In the rest of the analysis we consider only the run of the algorithm where $W^*$ is guessed correctly. 
The next lemma shows that the profit $apx$ of the finally returned solution, essentially coincides
with the value  $apx_D=\phi(G_0,W')$, that we obtain by dynamic programming.  Here we crucially exploit the fact that we only consider (good) drivers $D_j$ with large $n_j$. 

\begin{lemma}\label{lem:scaling}
$apx\geq \frac{1}{1+4\eps}apx_D$.
\end{lemma}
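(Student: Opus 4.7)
The plan is to compare, driver by driver, the DP contribution to the actual profit earned by the scaled solution on the original highway. First I would observe that $apx_D=\phi(G_0,W')$ decomposes naturally as a sum over drivers: in Step~D1 each fully-contained driver $D_j\subseteq P$ with $w'(D_j)\le b_j$ is counted with weight $w'(D_j)$, while in Step~D2 each \emph{good} driver $D_j$ with $n_j\ge\delta$ complete middle subpaths $P_i$ is counted, at exactly one recursion level $(P,W)$, with value $n_j\cdot W/\gamma$. Since a good driver at level $(P,W)$ is not contained in any single block $P_i$, it cannot also be charged at a deeper level, so these contributions are disjoint. Let $\mathcal A$ denote the set of drivers the DP charges and write $apx_D=\sum_{D_j\in\mathcal A} c_j$.

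The key structural invariant, provable by induction on the recursion depth (with base case Step~D1 placing exactly $(1/\eps)^y=W/\gamma$ ones at the deepest recursive level), is that the extracted $w'$ places exactly $W/\gamma$ ones on each block $P_i$ of the partition chosen at any recursion level with weight parameter $W$. For a good driver $D_j$ at level $(P,W)$ this implies $w'(P_{left}),w'(P_{right})\le W/\gamma$, since $P_{left}\subseteq P_{\ell-1}$ and $P_{right}\subseteq P_{r+1}$ are contained in blocks of weight $W/\gamma$; therefore
\[
n_j\cdot W/\gamma\;\le\; w'(D_j)\;\le\;(n_j+2)\cdot W/\gamma.
\]

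To conclude, for every $D_j\in\mathcal A$ I verify (i)~$w(D_j)\le b_j$, so $D_j$ contributes to $apx$, and (ii)~$w(D_j)\ge \tfrac{1}{1+4\eps}c_j$. Leaf drivers are immediate since $c_j=w'(D_j)\le b_j$ and $w(D_j)=\tfrac{\delta}{\delta+2}w'(D_j)$. For a good driver, the DP feasibility condition $n_j W/\gamma\le b_j$ together with $n_j\ge\delta$ gives
\[
w(D_j)=\tfrac{\delta}{\delta+2}w'(D_j)\le\tfrac{\delta}{\delta+2}(n_j+2)\tfrac{W}{\gamma}\le\tfrac{\delta}{\delta+2}\cdot\tfrac{\delta+2}{\delta}\cdot b_j=b_j,
\]
while the lower bound $w'(D_j)\ge n_j W/\gamma$ yields $w(D_j)\ge\tfrac{\delta}{\delta+2}\,c_j$. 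Summing over $\mathcal A$ (additional drivers satisfying $w(D_j)\le b_j$ only improve $apx$) and using the identity $\tfrac{\delta}{\delta+2}=\tfrac{1}{1+4\eps}$ for $\delta=1/(2\eps)$ finishes the proof. The main obstacle is precisely this budget check: the scaling factor $\tfrac{\delta}{\delta+2}$ is tuned so that the at-most-$2W/\gamma$ boundary overshoot on $w'(D_j)$ is absorbed by the $n_j\ge\delta$ middle pieces, and the whole argument rests on the structural invariant that boundary pieces lie in blocks of weight exactly $W/\gamma$.
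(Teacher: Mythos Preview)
Your argument is correct and follows essentially the same route as the paper's proof: bound $w'(D_j)$ between $n_j\cdot W/\gamma$ and $(n_j+2)\cdot W/\gamma$ using the structural invariant that each block carries exactly $W/\gamma$ ones, then show that the scaling factor $\tfrac{\delta}{\delta+2}$ together with $n_j\ge\delta$ makes the budget constraint feasible while preserving a $\tfrac{1}{1+4\eps}$ fraction of the DP contribution. Your treatment is in fact slightly more careful than the paper's, which only spells out the Step~D2 (good) drivers; you explicitly separate off the Step~D1 leaf drivers, where the bound is immediate since the DP records $c_j=w'(D_j)\le b_j$ directly.
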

\begin{proof}
Let $w'$ and $\mathcal{D}'$ be the weights and the set of drivers determining $apx_D$. Consider the corresponding dissection, and let $n_j=|\{i : P_i\subseteq D_j\}|$ and $D^s_j=\bigcup_{P_i\subseteq D_j}P_i$ be defined with respect to that dissection for each $D_j$.  

For any $D_j\in \mathcal{D}'$, $n_j\geq \delta=1/(2\eps)$ and $w'(D^s_j)=W/\gamma\cdot n_j\leq \budget_j$. 
The difference in weight between $D_j$ and $D_j^s$ only
lies in the two sub-intervals owning the endings of $D_j$, and hence $w'(D^s_j)\leq w'(D_j)\leq \frac{W}{\gamma}(n_j+2)$. It follows that 
$
w(D_j)=\frac{\delta}{\delta+2}w'(D_j)\leq \frac{n_j}{n_j+2}w'(D_j)\leq n_j\frac{W}{\gamma}\leq \budget_j. 
$
Hence, $D_j$ contributes to $apx$ with a profit $w(D_j)\geq \frac{\delta}{\delta+2}w'(D^s_j)=\frac{1}{1+4\eps}w'(D^s_j)$. The claim follows since $apx\geq \sum_{D_j\in \mathcal{D}'}w(D_j)\geq \frac{1}{1+4\eps}\sum_{D_j\in \mathcal{D'}}w'(D^s_j)=\frac{1}{1+4\eps}apx_D$.
\end{proof}

It remains to lower bound $apx_D$ in terms of $opt$. In order to simplify the analysis, suppose that we are given an oracle which, for a given $P\; \subseteq G_0$ with $w^*(P)=W=W'/\gamma^q$, $q<\ell$, produces a partition $\overline{P}^*=\{P^*_1,\ldots,P^*_{\gamma}\}$ such that $w^*(P^*_i)=W/\gamma$. Also assume that we remove all the drivers but the ones $\mathcal{D}^*$ in the optimal solution. Consider the variant of Step D where we apply recursively the following Bellman equation \vspace{\inlineshrink}
\begin{equation*}
\phi'(P,W)= \sum_{i=1}^{\gamma} \phi'\left(P^*_i,W/\gamma\right)  + \sum_{\substack{\mathcal{D}^*\ni D_j\subseteq P,\\ n_j:=| \{ i : P^*_i \subseteq D_j\}|\geq \delta,\\ W/\gamma\cdot n_j\leq \budget_j}} W/\gamma\cdot n_j, \vspace{\inlineshrink}
\end{equation*}
until $W=(1/\eps)^y$, in which case we use brute force to compute the optimal weights like in Step D1. It is not hard to see that $apx_O:=\phi'(G_0,W')$ is a lower bound on $apx_D$.
\begin{corollary}\label{lem:comparison}
$apx_D\geq apx_O.$
\end{corollary}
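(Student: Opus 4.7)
The plan is to prove, by induction on the recursion depth $q = \ell, \ell-1, \ldots, 0$ (equivalently, on $W = W'/\gamma^q$), that $\phi(P, W) \geq \phi'(P, W)$ for every path $P \subseteq G_0$ and every relevant $W$. Applying this to $P = G_0$ and $W = W'$ immediately yields $apx_D = \phi(G_0, W') \geq \phi'(G_0, W') = apx_O$.

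The base case $W = (1/\eps)^y$ is handled by Step D1: both $\phi(P, W)$ and $\phi'(P, W)$ maximize the same objective over the same set of candidate weight functions $w : P \to \{0,1\}$ with $w(P) = (1/\eps)^y$, except that $\phi$ sums profits over \emph{all} drivers $D_j \subseteq P$ with $w(D_j) \leq \budget_j$, whereas $\phi'$ restricts to $D_j \in \mathcal{D}^*$. Since every driver contributes a nonnegative profit and $\mathcal{D}^* \subseteq \mathcal{D}$, the inequality $\phi(P, W) \geq \phi'(P, W)$ is immediate.

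For the inductive step, fix $P$ and $W = W'/\gamma^q$ with $q < \ell$, and assume the inequality for all $W/\gamma$-entries. The DP in Step D2 maximizes over every partition $\overline{P} \in \mathcal{P}_\gamma(P)$, and in particular the oracle's partition $\overline{P}^* = \{P_1^*, \ldots, P_\gamma^*\}$ is one admissible choice. Plugging $\overline{P}^*$ into the D2 formula gives
\[
\phi(P, W) \;\geq\; \sum_{i=1}^{\gamma} \phi(P_i^*, W/\gamma) \;+\; \sum_{\substack{D_j \subseteq P,\\ n_j \geq \delta,\\ W/\gamma \cdot n_j \leq \budget_j}} W/\gamma \cdot n_j,
\]
where $n_j = |\{i : P_i^* \subseteq D_j\}|$ is now computed with respect to $\overline{P}^*$ and therefore matches the definition used in the $\phi'$-recursion. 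Applying the induction hypothesis to each $P_i^*$ term, and discarding from the second sum every driver not in $\mathcal{D}^*$ (which only decreases the right-hand side, since all summands are nonnegative), we obtain exactly the defining expression for $\phi'(P, W)$, hence $\phi(P, W) \geq \phi'(P, W)$.

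This is essentially a ``the max dominates any particular realization'' argument, so I do not expect a real obstacle. The only subtle point to verify carefully is that the two notions of $n_j$ and of ``good driver'' align when we evaluate the $\phi$-recursion at the specific partition $\overline{P}^*$: once the partition is fixed, the combinatorial condition $n_j \geq \delta$ and the budget condition $W/\gamma \cdot n_j \leq \budget_j$ appearing in the D2 formula are literally the same as those in the $\phi'$-recursion, so the comparison of the two expressions reduces to the monotonicity statements ``maximum $\geq$ value at one choice'' and ``sum over a superset $\geq$ sum over a subset of nonnegative terms.''
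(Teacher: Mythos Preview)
Your argument is correct and is precisely the natural justification the paper leaves implicit (the paper states the corollary without proof, after the remark ``It is not hard to see that $apx_O:=\phi'(G_0,W')$ is a lower bound on $apx_D$''). Your induction on $q$ together with the two monotonicity observations---the maximum over all partitions dominates the value at the oracle partition $\overline{P}^*$, and summing nonnegative terms over all drivers dominates the restriction to $\mathcal{D}^*$---is exactly what is needed.
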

Hence it is sufficient to lower bound $apx_O$. The value $apx_O$ is associated to a unique \emph{optimal dissection}. With the same notation as in the proof of Lemma \ref{lem:scaling}, we let, for a given driver $D_j$, $n_j$ and $D^s_j$ be defined with respect to the optimal dissection. We next say that a subpath in the optimal dissection is \emph{at level} $q\in \{0,1,\ldots,\ell\}$ if its optimal weight is $W'/\gamma^q$. Similarly, we say that a driver $D_j$ is at level $q$ in the optimal dissection if it is contained in a subpath of level $q$, but not $q+1$.

Let $\alpha_q=W'/\gamma^q$. Consider any driver $D_j\in \mathcal{D}^*$, with $\alpha_{q+1}< w^*(D_j)\leq \alpha_q$. We call $D_j$ \emph{good} if it is at level $\ell$ in the dissection, or it is at level $q<\ell$ and it contains at least $\delta$ subpaths of level $q+1$ (i.e., $n_j\geq \delta$).

Observe that good drivers $D_j$ contribute to the value of $apx_O$ with a profit $w^*(D^s_j)\geq w^*(D_j)\cdot \frac{\delta}{\delta+2}=\frac{1}{1+4\eps}\cdot  w^*(D_j)$. 
Hence, it is sufficient to show that a given driver in $\mathcal{D}^*$ is good with probability close to one.
\begin{lemma} \label{lem:good}
Each driver $D_j\in \mathcal{D}^*$ is good with probability at least $1-3\eps$. 
\end{lemma}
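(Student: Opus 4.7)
My plan is to decompose the bad event for $D_j$ into two sub-events controlled by the two independent sources of randomness, bound each by $O(\eps)$, and take a union bound. Fix $w := w^*(D_j)$ and, for each $y \in \{1,\ldots,1/\eps\}$, let $q(y)$ be the largest index $\le \ell$ with $\alpha_q(y) := W'/\gamma^q \ge w$; when $q(y) < \ell$ set $\rho(y) := w/\alpha_{q(y)+1}(y) \in (1,\gamma]$. I claim $D_j$ is good whenever (i)~$D_j$ does not straddle any level-$q(y)$ boundary of the optimal dissection, and (ii)~either $q(y) = \ell$ or $\rho(y) \ge \delta + 2$: (i) places $D_j$ at level exactly $q(y)$, and if moreover $q(y) < \ell$, writing $D_j$'s endpoints inside its level-$q(y)$ subpath as $[a,b]$ gives $n_j \ge \lfloor b/\alpha_{q(y)+1}\rfloor - \lceil a/\alpha_{q(y)+1}\rceil \ge \rho(y) - 2 \ge \delta$. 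So it suffices to prove $\Pr[E_1] + \Pr[E_2] \le 3\eps$ for $E_1 := \{D_j \text{ straddles a level-}q(y) \text{ boundary}\}$ and $E_2 := \{q(y) < \ell \text{ and } \rho(y) < \delta + 2\}$.

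The key technical step is a log-scale analysis of $\lambda(y) := \log_{1/\eps}(\alpha_{q(y)}(y)/w) \ge 0$. When $y$ increases by $1$, $\alpha_{q(y)}$ is multiplied by $1/\eps$ and then, at most once, divided by $\gamma = (1/\eps)^{1/\eps}$ when the level $q(y)$ advances to keep $\alpha_q \ge w$; hence $\lambda(y+1) - \lambda(y) \in \{1,\,1-1/\eps\}$. Because $1/\eps \in \N$ (by $1/(2\eps) \in \N$), each such increment is an integer, and because each increment is $\equiv 1 \pmod{1/\eps}$, the residues $\lambda(y) \bmod 1/\eps$ cycle through all $1/\eps$ residues as $y$ ranges over $\{1,\ldots,1/\eps\}$. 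Consequently the $1/\eps$ values $\lambda(y)$ share a common fractional part, and their integer parts $\lfloor\lambda(y)\rfloor$ are $1/\eps$ pairwise distinct non-negative integers.

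Both bounds now follow by direct counting. For $E_2$: the condition $\rho(y) < \delta + 2$ with $q(y) < \ell$ rewrites as $\lambda(y) > 1/\eps - \log_{1/\eps}(\delta+2)$, an interval of length $\log_{1/\eps}(1/(2\eps)+2) \le 1$ (for $\eps$ sufficiently small, by a direct estimate); since the $\lambda(y)$'s have integer gaps in a single fractional coset, at most one $\lambda(y)$ lies in such an interval, giving $\Pr[E_2] \le \eps$. For $E_1$: when $q(y) \ge 1$, the ratio $W^*/\alpha_{q(y)}(y) = (1/\eps)^{q/\eps - y}$ is a positive integer (using $y \le 1/\eps \le q/\eps$), so the random shift $x \in \{1,\ldots,W^*\}$ makes $(x + a_j^G) \bmod \alpha_{q(y)}(y)$ uniform, and $\Pr_x[D_j\text{ straddles}\mid y] \le w/\alpha_{q(y)}(y) = (1/\eps)^{-\lambda(y)}$; the case $q(y) = 0$ is trivial as $G_0$ is the unique level-$0$ subpath. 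Averaging over $y$ and using the distinctness of the $\lfloor\lambda(y)\rfloor$,
\[
\Pr[E_1] \;\le\; \eps\sum_{y=1}^{1/\eps}(1/\eps)^{-\lambda(y)} \;\le\; \eps\sum_{k=0}^{\infty}\eps^{k} \;=\; \frac{\eps}{1-\eps} \;\le\; 2\eps
\]
for $\eps \le 1/2$, and the union bound yields $\Pr[D_j\text{ not good}] \le 3\eps$. The step I expect to be most delicate is verifying the integer-parts-are-distinct claim cleanly across the transition from the dissection regime ($q(y) < \ell$) to the brute-force regime ($q(y) = \ell$); this is exactly where the hypotheses $1/(2\eps) \in \N$ and ``$y$ ranges over $1/\eps$ consecutive integers'' are essential.
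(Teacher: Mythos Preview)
Your proof is correct and uses the same high-level strategy as the paper---split the bad event into a ``wrong scale'' sub-event governed by $y$ and a ``wrong position'' sub-event governed by $x$, then take a union bound---but your decomposition and bookkeeping differ in an interesting way. The paper defines a symmetric ``risky'' event, namely $\exists q:\eps\alpha_q < w^*(D_j) < \alpha_q/\eps$, and observes via a simple log-scale (``tick'') picture that at most $2$ of the $1/\eps$ choices of $y$ make $D_j$ risky; conditioned on not being risky, the straddling probability over $x$ is at most $\eps$, for a total of $3\eps$. You instead keep only the one-sided scale event $E_2=\{\rho(y)<\delta+2\}$ (costing $\eps$) and absorb the other side into the straddling event $E_1$, which you bound by summing $w/\alpha_{q(y)}$ over \emph{all} $y$ and controlling the sum via your distinct-integer-parts lemma for $\lambda(y)$, yielding a geometric series and the bound $2\eps$.

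Both routes reach $3\eps$. The paper's argument is shorter and more pictorial; yours is more quantitative and makes explicit the arithmetic reason the $y$-randomization works (the increments $\lambda(y{+}1)-\lambda(y)\in\{1,\,1-1/\eps\}$ are integers congruent to $1$ modulo $1/\eps$, forcing the $\lambda(y)$ to hit distinct integer floors). Your anticipated delicate point---the transition to $q(y)=\ell$---is indeed harmless: once $q(y)=\ell$ the increment is always $+1$, so distinctness persists, and those $y$ contribute neither to $E_2$ (by definition) nor harmfully to the $E_1$ sum. One small remark: your length bound $\log_{1/\eps}(\delta+2)\le 1$ needs $\eps\le 1/4$; the paper's standing assumption $\eps\le 1/2$ together with $1/(2\eps)\in\N$ already forces $\eps\in\{1/2,1/4,1/6,\ldots\}$, so you should either note the case $\eps=1/2$ separately or simply assume $\eps\le 1/4$ up front.
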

\begin{proof}
Let us upper bound the probability that a driver $D_j$ is {\em bad} (i.e., not good). We say that driver $D_j$ is \emph{risky} if 
\[
 \exists q: \eps\,\alpha_q < w^*(D_j) < \frac{1}{\eps}\alpha_q. 
\]
Consider a  log-scale axis and term \emph{tick} the distance that corresponds to a factor of $1/\eps$. Then consecutive  $\alpha_q$'s have a distance of $1/\eps$ ticks to each other (see Figure~\ref{fig:LogscaleAxis}). 
\begin{Figure}
\begin{center}
\psset{xunit=0.7cm, yunit=1cm}
\begin{pspicture}(-1,-1)(23,0.8)
  \pspolygon[fillstyle=solid, fillcolor=lightgray, linestyle=none](2,-10pt)(2,10pt)(4,10pt)(4,-10pt)
  \pspolygon[fillstyle=solid, fillcolor=lightgray, linestyle=none](6,-10pt)(6,10pt)(8,10pt)(8,-10pt)
  \pspolygon[fillstyle=solid, fillcolor=lightgray, linestyle=none](10,-10pt)(10,10pt)(12,10pt)(12,-10pt)
  \pspolygon[fillstyle=solid, fillcolor=lightgray, linestyle=none](19,-10pt)(19,10pt)(21,10pt)(21,-10pt)
  \psaxes[labels=none, linestyle=none, linecolor=white]{->}(0,0)(0,0)(23,0) 
  \psline[linecolor=black]{|->}(0,0)(23,0)
  \psline(3,-10pt)(3,10pt) \rput[c](3,15pt){$\alpha_{q+1}$}
  \psline(7,-10pt)(7,10pt) \rput[c](7,15pt){$\alpha_{q}$}
  \psline(11,-10pt)(11,10pt) \rput[c](11,15pt){$\alpha_{q-1}$}
  \psline(17,-10pt)(17,10pt) \rput[c](17,15pt){$W^*$}
  \psline(20,-10pt)(20,10pt) \rput[c](20,15pt){$\alpha_0 = W'$}
  \psline[arrowsize=6pt]{|<->|}(14,-0.7)(15,-0.7) \rput[c](14.5,-1.05){$1$ tick}
  \psline[arrowsize=6pt]{|<->|}(3,-0.7)(7,-0.7) \rput[c](5,-1.05){$1/\eps$ ticks}
  \psline[arrowsize=6pt]{|<->|}(17,-0.7)(20,-0.7) \rput[c](18.5,-1.05){$y$ ticks}
  \rput[c](1,0.5){$\ldots$}
  \rput[c](13,0.5){$\ldots$}
\end{pspicture}
\end{center}
\caption{Log-scale axis. The regions of risky weights are grayshaded.} \label{fig:LogscaleAxis}
\end{Figure}
The region of risky weights w.r.t. a specific $\alpha_q$ is 
the interval $]\eps\cdot\alpha_q, \alpha_q/\eps[$, hence on
the log-scaled axis it is an (open) interval of $2$ ticks length.
The random choice of $y$ yields that all $\alpha_q$'s are simultaneously
shifted by $y\in\{1,\ldots,1/\varepsilon\}$ ticks to the right. Hence for each value of $w^*(D_j)$
at most $2$ out of $1/\eps$ choices of $y$ cause that $D_j$ 
is risky:
\[
\Pr[D_j \textrm{ is risky}] \leq 2\eps. 
\]
Next condition on the event that $D_j$ is not risky. Suppose $D_j$ is not at level $\ell$, otherwise there is nothing to show. Observe that there is a $q$ with \vspace{\inlineshrink}
\[
\frac{1}{\eps} \alpha_q \leq w^*(D_j) \leq \eps\,\alpha_{q-1}. \vspace{\inlineshrink}
\]
Then deterministically $D_j$ contains at least $1/\eps-1\geq1/(2\eps)=\delta$ many level $q$ subpaths. 
Since the random shift $x$ is chosen uniformly at random from
$\{ 1,\ldots,W^* \}$ and $W^*$ is a multiple of $\alpha_{q-1}$\footnote{Except of the case
when $\alpha_{q-1} = W'$, but then deterministically the driver $D_j$ cannot cross the boundary.}
we furthermore have
\[
\Pr[D_j \textrm{ is at level} < q] \leq \frac{w^*(D_j)}{a_{q-1}} \leq \eps. 
\]
Applying the union bound, we obtain that driver $D_j$ is bad with
probability at most $3\eps$.
\end{proof}
\begin{corollary}\label{cor:good}
$E[apx_O]\geq \frac{1-3\eps}{1+4\eps}opt.$
\end{corollary}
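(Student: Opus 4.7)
The plan is a short, direct expectation computation that chains together the two facts already established in the excerpt: (i) the per-driver profit accounting and (ii) the goodness probability in Lemma~\ref{lem:good}. First I would recall that, by definition, $opt = \sum_{D_j \in \mathcal{D}^*} w^*(D_j)$, and that $apx_O$ is a sum of contributions over the drivers $\mathcal{D}^*$, with bad drivers contributing nonnegatively and good drivers contributing at least $w^*(D_j^s) \geq \frac{\delta}{\delta+2}\, w^*(D_j) = \frac{1}{1+4\eps}\, w^*(D_j)$, exactly as noted in the paragraph preceding Lemma~\ref{lem:good}.

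Next I would write this as the deterministic bound
\[
apx_O \;\geq\; \frac{1}{1+4\eps}\sum_{D_j \in \mathcal{D}^*} \mathbf{1}[D_j \text{ is good}]\cdot w^*(D_j),
\]
holding for every realization of the random variables $x,y$. Taking expectations on both sides and using linearity of expectation moves the probability inside:
\[
E[apx_O] \;\geq\; \frac{1}{1+4\eps}\sum_{D_j \in \mathcal{D}^*} \Pr[D_j \text{ is good}]\cdot w^*(D_j).
\]

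Finally I would invoke Lemma~\ref{lem:good}, which gives $\Pr[D_j\text{ is good}] \geq 1 - 3\eps$ uniformly over $D_j \in \mathcal{D}^*$, to obtain
\[
E[apx_O] \;\geq\; \frac{1-3\eps}{1+4\eps}\sum_{D_j \in \mathcal{D}^*} w^*(D_j) \;=\; \frac{1-3\eps}{1+4\eps}\, opt,
\]
which is the desired inequality.

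There is no real obstacle here: the only thing to be slightly careful about is that the $\frac{1}{1+4\eps}$ bound on each good driver's contribution is a pointwise (deterministic) bound, which justifies pulling the factor outside the expectation, and that bad drivers can be safely dropped because $apx_O$ arises from a legal (integer, budget-respecting) assignment and so each driver's individual contribution is nonnegative.
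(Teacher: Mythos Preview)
Your proposal is correct and follows essentially the same approach as the paper: both lower bound $apx_O$ by the sum of $w^*(D_j^s)$ over good drivers, apply the pointwise inequality $w^*(D_j^s)\geq \frac{1}{1+4\eps}w^*(D_j)$, and then use linearity of expectation together with Lemma~\ref{lem:good}. The only cosmetic difference is that you write the goodness condition via an indicator function while the paper writes it as a restricted summation.
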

\begin{proof}
By linearity of expectation \vspace{\inlineshrink}
\begin{align*}
\hspace{-3.5mm} E[apx_O] & \geq E\Big[\sum_{\substack{D_j\in \mathcal{D}^*,\\D_j\text{ good}}}w^*(D^s_j)\Big] \geq\frac{1}{1+4\eps} E\Big[\sum_{\substack{D_j\in \mathcal{D}^*,\\D_j\text{ good}}}w^*(D_j)\Big] \geq \frac{1-3\eps}{1+4\eps}\sum_{D_j\in \mathcal{D}^*}w^*(D_j)=\frac{1-3\eps}{1+4\eps}opt. \;\; \qedhere 
\end{align*}
\end{proof}

\noindent Now we have all the ingredients to prove the main result of this paper.
\begin{theorem}\label{thr:ptas}
There is a randomized PTAS for the highway problem.
\end{theorem}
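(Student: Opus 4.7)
The plan is to assemble the three lemmas that have already been proved into a single chain of inequalities, and then absorb the preprocessing loss from Lemma~\ref{lem:wellround}. Concretely, by Lemma~\ref{lem:time} algorithm {\tt hptas} runs in polynomial time, so the only thing left is to relate its output profit to $opt$. I would first apply Lemma~\ref{lem:scaling} to get $apx \geq \frac{1}{1+4\eps}\,apx_D$, then chain with Corollary~\ref{lem:comparison} to conclude $apx \geq \frac{1}{1+4\eps}\,apx_O$, and finally take expectations (over the random variables $x,y$ drawn in Step B2) and invoke Corollary~\ref{cor:good} to arrive at
\[
E[apx] \;\geq\; \frac{1}{1+4\eps}\,E[apx_O] \;\geq\; \frac{1-3\eps}{(1+4\eps)^{2}}\,opt.
\]

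Next I would note that for every target error $\eps'>0$ one can pick the constant $\eps$ in the algorithm small enough (e.g., $\eps := \Theta(\eps')$) so that $\frac{1-3\eps}{(1+4\eps)^{2}} \geq 1 - \eps'/2$; the polynomial bounds in Lemma~\ref{lem:time} still hold since $\gamma=(1/\eps)^{1/\eps}$ is a constant depending only on $\eps$. Finally, Lemma~\ref{lem:wellround} lets me restrict to well-rounded instances at the cost of one more multiplicative $(1+\eps)$ factor, which can be folded into the same choice of $\eps$ so that the overall loss is at most $\eps'$.

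The only thing that requires any thought is that the guarantee we proved, $E[apx]\geq (1-O(\eps))\,opt$, is in expectation; to match the usual PTAS definition I would either observe that since the random variables $x\in\{1,\ldots,W^*\}$ and $y\in\{1,\ldots,1/\eps\}$ have polynomially-many realizations the algorithm can be derandomized by enumerating all of them and returning the best output (as remarked in Section~\ref{sec:results}), or else argue that averaging over this polynomial-size domain turns the expectation bound into a worst-case bound. Either way the polynomial running time is preserved, and combining with the reduction of Lemma~\ref{lem:wellround} yields a $(1-\eps')$-approximation in time polynomial in $n,m$ for every fixed $\eps'>0$, which is exactly the claim of the theorem. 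The only mildly subtle point — and the part I would write out most carefully — is this derandomization/expectation step, since the rest is a mechanical composition of the previously established lemmas.
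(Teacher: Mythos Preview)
Your proposal is correct and follows essentially the same route as the paper: chain Lemma~\ref{lem:scaling}, Corollary~\ref{lem:comparison}, and Corollary~\ref{cor:good} to get $E[apx]\geq \frac{1-3\eps}{(1+4\eps)^2}\,opt$, then absorb the $(1+\eps)$ preprocessing loss from Lemma~\ref{lem:wellround} and cite Lemma~\ref{lem:time} for the running time. Your additional discussion of the expectation-versus-worst-case point and derandomization is more explicit than the paper's proof (which simply states the combined ratio $\frac{(1+4\eps)^2(1+\eps)}{1-3\eps}$ and defers derandomization to Corollary~\ref{cor:ptas}), but the underlying argument is identical.
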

\begin{proof}
Consider the randomized algorithm which first transforms the input in a well-rounded instance as described in Section \ref{sec:preliminaries}, and then applies algorithm {\tt hptas}. From Lemmas \ref{lem:wellround} and \ref{lem:time}, this algorithm takes polynomial time. By Lemma \ref{lem:wellround}, Lemma \ref{lem:scaling}, Lemma \ref{lem:comparison}, and Corollary \ref{cor:good}, the approximation ratio of the algorithm is $\frac{(1+4\eps)^2(1+\eps)}{1-3\eps}$.
\end{proof} 
\noindent The PTAS in Theorem \ref{thr:ptas} can be derandomized by considering all the (polynomially many) choices of $x$ and $y$ in Step B2.
\begin{corollary}\label{cor:ptas}
There is a deterministic PTAS for the highway problem.
\end{corollary}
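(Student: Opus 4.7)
The plan is to observe that the only randomization in Algorithm \texttt{hptas} occurs in Step B2, which samples the pair $(x,y)$ from the product domain $\{1,\ldots,W^*\}\times\{1,\ldots,1/\eps\}$. Since $W^*\leq nm\gamma/\eps^2$ is polynomially bounded (after the reduction to a well-rounded instance, and since $\gamma$ is a constant depending only on $\eps$), and $1/\eps$ is a constant, this domain has cardinality polynomial in $n$ and $m$. Everything else in the algorithm---the dynamic program in Phase (D) and the scaling in Phase (S)---is already fully deterministic once $(x,y)$ is fixed.

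Given this, the derandomization strategy is straightforward exhaustive enumeration. First I would run, in parallel, one copy of the deterministic portion of \texttt{hptas} for every pair $(x,y)$ in the domain above (and every guess of $W^*$, as already done in Step B1). For each such run, I would compute the corresponding solution and its actual profit, and output the best one. Since the number of runs is polynomial and each run takes polynomial time by Lemma \ref{lem:time}, the total running time remains polynomial.

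The correctness argument reduces to a standard averaging step. For any fixed $(x,y)$, both $apx$ and $apx_O$ are deterministic quantities, and the chain of inequalities $apx\geq \frac{1}{1+4\eps}apx_D\geq \frac{1}{1+4\eps}apx_O$ (Lemma \ref{lem:scaling} and Corollary \ref{lem:comparison}) holds pointwise. Corollary \ref{cor:good} states $E_{(x,y)}[apx_O]\geq \frac{1-3\eps}{1+4\eps}opt$, so by averaging there must exist at least one realization $(x^*,y^*)$ with $apx_O(x^*,y^*)\geq \frac{1-3\eps}{1+4\eps}opt$, and the enumeration necessarily encounters it (or a better one). Combining with the $(1+\eps)$ loss from Lemma \ref{lem:wellround}, the resulting deterministic algorithm achieves the same approximation ratio $\frac{(1+4\eps)^2(1+\eps)}{1-3\eps}$ as in Theorem \ref{thr:ptas}, which tends to $1$ as $\eps\to 0$.

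There is no real obstacle here: the whole point of bounding the support of $(x,y)$ by a polynomial in the problem instance was to enable precisely this derandomization. The only thing to verify carefully is that the polynomial size of the domain is preserved after the well-rounding reduction, which is immediate since $W^*$ itself is a polynomial integer in the reduced instance.
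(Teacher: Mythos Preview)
Your proposal is correct and matches the paper's own argument exactly: the paper simply remarks that the randomized PTAS of Theorem~\ref{thr:ptas} is derandomized by enumerating all (polynomially many) choices of $x$ and $y$ in Step~B2. Your elaboration of the averaging step and the polynomial bound on the domain is a faithful expansion of that one-line justification.
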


\section{Extensions}\label{sec:extensions}

In this section we extend our approach to two variants of the highway problem.

\subsection{Tollbooth with a Constant Number of Leaves}

We next sketch a PTAS for the tollbooth problem, when the input graph $G$ is a tree with a constant number $\theta=O(1)$ of leaves: details are given in Appendix \ref{appendix:TollboothConstantNumOfLeaves}. Recall that the problem is $\mathbf{APX}$-hard when the number of leaves is arbitrary.

By the same arguments as in the highway case, we assume that optimal weights $w^*$ are $0/1$-valued, and that their sum $W^*$ is bounded by a polynomial in $n$. We choose an arbitrary leaf $s(G)$ of $G$ as a \emph{source}, and call the other leaves \emph{sinks}. Analogously, given any subtree $T$ of $G$, we call the leaf $s(T)$ of $T$ which is closest to $s(G)$, the \emph{source} of $T$. The other leaves of $T$ are called \emph{sinks} of $T$. By appending a path of length $W^*\gamma$ to $s(G)$, we can assume that the total weight along each source-sink pair is $\tilde{W}:=\gamma^\ell$ for some integer $\ell$. The resulting instance is \emph{well-rounded}.

Imagine to split $G$ at any node whose $w^*$-distance from $s(G)$ is
an integer multiple of $\tilde{W}/\gamma$. In such a way we
obtain a forest $\overline{T}=\{T_1,\ldots,T_q\}$ of subtrees with the following property:
any source-sink path in $T_i$ has weight $\tilde{W}/\gamma$. We iterate
this process until the total weight which has to be installed on
the subtree reaches a constant value. We call this dissection \emph{optimal}. 

Consider a driver $D_j$ and let $T$ be the smallest subtree in the optimal dissection 
that fully contains $D_j$. Suppose $W=\tilde{W}/\gamma^q$ is the weight that $w^*$
installs on any source-sink path of $T$. Let $\overline{T} = \{ T_1,\ldots,T_q\}$ be the partition of $T$ in the optimal dissection.
We say that $D_j$ \emph{crosses} $T_i$ if it contains exactly one source-sink path of $T_i$.
We say that driver $D_j$ is \emph{good} if the number $n_j$ of
crossed subtrees is at least a large constant $\delta := \frac{1}{2\eps}$. Also in this case, we can define a shortened driver $D_j^s = \bigcup_{T_i \textrm{ crossed by }D_j} (T_i \cap D_j)$. 
However note that in this case $D^s_j$ might consist of two disjoint paths. (In particular, this might happen if $D_j$ does not lie along a source-sink path of $G$).

Analogously to the highway case, it is sufficient to show that the profit coming from shortened drivers is large with respect to the optimal dissection. Then for subtrees $T$ of the instance and weights $W$, 
we compute table entries $\phi(T,W)$ giving the optimum profit 
that can be obtained from the shortened
paths of good drivers $D_j \subseteq T$, in such a way that on each path from $s(T)$ to any other
leaf of $T$ one installs a total weight of $W$. 

\begin{theorem}\label{thr:tollbooth}
There is a deterministic PTAS for the tollbooth problem with a constant number of leaves.
\end{theorem}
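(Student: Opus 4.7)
The plan is to mimic the three-phase structure (Bounding, Dynamic Programming, Scaling) of algorithm \texttt{hptas}, adapting each step to the branching structure of $G$ while exploiting that $\theta=O(1)$ keeps the combinatorics polynomial. After the well-rounding already sketched in the excerpt, I would execute the bounding phase by attaching, at the source $s(G)$, a single path of random length $\tilde{W}\cdot((1/\eps)^y-1)-x$ with $x\in\{1,\ldots,\tilde{W}\}$ and $y\in\{1,\ldots,1/\eps\}$ chosen uniformly, and set $W':=\tilde{W}\cdot(1/\eps)^y$. These shifts play exactly the role they do on the line: $y$ randomizes the log-scale positions of the weight levels $\alpha_q=W'/\gamma^q$, while $x$ randomizes the absolute positions of the cut-points.

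Next I would define the DP table $\phi(T,W)$ indexed by pairs $(T,W)$, where $T$ ranges over candidate subtrees and $W\in\{W'/\gamma^q\}_{q=0}^{\ell}$. A candidate subtree is any connected subgraph obtained by cutting $G_0$ at a choice of at most $\theta$ ``boundary'' edges (one for the source side, at most $\theta-1$ for the sink sides); since $\theta=O(1)$, there are only polynomially many such $T$. Base case: for $W=(1/\eps)^y$, compute $\phi(T,W)$ by brute force over $\binom{|T|}{W}$ placements of $W$ ones, as in Step~D1. Recursive case: enumerate all partitions $\overline{T}=\{T_1,\ldots,T_q\}$ of $T$ into subtrees such that each source-sink path of every $T_i$ carries weight $W/\gamma$, and set
\[
\phi(T,W)=\max_{\overline{T}}\Bigl\{\sum_{i}\phi(T_i,W/\gamma)+\sum_{\substack{D_j\subseteq T,\ n_j\geq\delta\\ (W/\gamma)n_j\leq b_j}}(W/\gamma)\,n_j\Bigr\},
\]
where $n_j=|\{i:T_i\text{ is crossed by }D_j\}|$. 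Each partition is determined by at most $\theta$ split positions (one per branch of $T$), giving polynomially many partitions per $T$, so the DP runs in polynomial time. Finally the scaling phase outputs $w=w'\cdot\delta/(\delta+2)$ restricted to $E$.

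For the analysis I would essentially copy the structure of Lemmas~\ref{lem:scaling},~\ref{lem:comparison},~\ref{lem:good} and Corollary~\ref{cor:good}. The scaling lemma still works because, even when $D_j$ is a path that branches through $s(T)$ and $D_j^s$ consists of two disjoint arcs, the portion $D_j\setminus D_j^s$ lies in at most two boundary pieces of the dissection (one at each endpoint of $D_j$), so $w'(D_j)\leq w'(D_j^s)+2W/\gamma=(n_j+2)W/\gamma$ and the factor $\delta/(\delta+2)=1/(1+4\eps)$ survives. For the comparison, an oracle that on each level produces the partition matching the optimal dissection yields $apx_O\leq apx_D$. For the goodness lemma I would redefine $\alpha_q=W'/\gamma^q$ and say $D_j\in\mathcal{D}^*$ is risky if $\eps\alpha_q<w^*(D_j)<\alpha_q/\eps$ for some $q$; the identical log-scale picture (Figure~\ref{fig:LogscaleAxis}) gives $\Pr[\text{risky}]\leq 2\eps$ over the choice of $y$, and conditioned on being non-risky, there is $q$ with $\alpha_q/\eps\leq w^*(D_j)\leq\eps\alpha_{q-1}$, so $D_j$ deterministically contains at least $1/\eps-1\geq\delta$ level-$q$ subtrees along its unique path and is entirely contained in a single level-$(q-1)$ subtree except with probability $w^*(D_j)/\alpha_{q-1}\leq\eps$ over the random shift $x$. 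A union bound again yields $\Pr[D_j\text{ bad}]\leq 3\eps$ and hence $E[apx_O]\geq\tfrac{1-3\eps}{1+4\eps}\,opt$, giving the stated approximation ratio; derandomization is by enumerating all $O(\tilde{W}/\eps)$ pairs $(x,y)$.

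The main obstacle I anticipate is verifying the two structural claims that were trivial on a line: (i) the set of candidate subtrees and the set of admissible partitions of each such $T$ both have polynomial size, which requires using $\theta=O(1)$ to bound the number of boundary edges of any subtree arising in the optimal dissection, and (ii) the ``deterministic $n_j\geq 1/\eps-1$'' step in the goodness lemma, which relies on the observation that a driver is itself a simple path in $G$ and thus traverses the weight levels $\alpha_q,\alpha_{q+1},\ldots$ along its unique trajectory, so the line-graph counting argument applies verbatim on that trajectory even though the ambient dissection is branching.
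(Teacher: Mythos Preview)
Your overall strategy matches the paper's, but there is a genuine gap in how you treat \emph{bent} drivers, i.e., drivers $D_j$ that do not lie on a single source--sink path of $G$. Such a driver decomposes as $D_j=D'_j\cup D''_j$, two straight arms meeting at an apex $c$ (the node of $D_j$ closest to $s(G)$). The level-$(q{+}1)$ subtree $T_c$ containing $c$ is \emph{not} crossed by $D_j$: both arms enter $T_c$ from the sink side and never reach $s(T_c)$, so $D_j$ contains no source--sink path of $T_c$. Yet $w'(D_j\cap T_c)$ can be as large as $2W/\gamma$ (one contribution from each arm). Hence $D_j\setminus D_j^s$ lies in \emph{three} uncrossed pieces (the apex piece plus the two endpoint pieces), with total weight up to $4W/\gamma$, not $2W/\gamma$. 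Your bound $w'(D_j)\le(n_j+2)W/\gamma$ is therefore false for bent drivers, and the scaling factor $\delta/(\delta+2)$ does not guarantee $w(D_j)\le b_j$: for $n_j=\delta$ and $b_j=n_jW/\gamma$ one gets $w(D_j)\le\frac{\delta(\delta+4)}{\delta+2}\cdot\frac{W}{\gamma}>b_j$. The paper fixes this by scaling with $\delta/(\delta+4)$. The same miscount breaks your goodness claim: for bent drivers one only gets $n_j\ge 1/\eps-4$, not $1/\eps-1$, which is why the paper takes $\eps\le 1/8$.

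The root of the error is the sentence you flag as an ``obstacle'' and then dismiss: the line-graph counting argument does \emph{not} apply verbatim along the driver's trajectory, because as a function of $w^*$-distance from $s(G)$ that trajectory folds back at the apex, so the apex subtree is an additional boundary piece invisible to the line argument. A secondary issue: your bounding phase attaches only at $s(G)$, with length $\tilde W((1/\eps)^y-1)-x$; this leaves each source--sink path with weight $W'-x\ne W'$. The paper attaches $x$ at the source and $\tilde W((1/\eps)^y-1)-x$ at \emph{each} sink, so that every source--sink path has weight exactly $W'$ and the shift $x$ still randomizes cut positions. Finally, a partition of $T$ at one level requires up to $(\gamma-1)$ breakpoints per source--sink path (not one), giving $O(\bar n^{(\gamma-1)\theta})$ partitions; your ``$\theta$ split positions'' undercounts, though the polynomial bound survives since $\gamma,\theta=O(1)$.
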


\subsection{Maximum-Feasible Subsystem for Interval Matrices}

In this section we sketch a multi-criteria PTAS for the maximum-feasible subsystem problem on interval matrices ($\maxfs$). More precisely, we show the slightly more general statement:
\begin{theorem} \label{thm:MaxFSGeneralPTAS}
Given a matrix $A \in \{0,1\}^{m\times n}$ with rows $a_1,\ldots,a_m$ having consecutive ones, 
weights $v_1,\ldots,v_m\in\Q_{\geq0}$ and integer bounds $0 \leq \ell_j \leq u_j$, $j=1,\ldots,m$. Let $opt = \max_{w\geq\mathbf{0}} \{ \sum_{j: \ell_j \leq a_j^Tw \leq u_j} v_j \}$.
Then for every fixed $\varepsilon > 0$ one can compute deterministically in polynomial time in $n$, $m$ and $\log \max \{ \ell_j \}$, 
a weight function $w\geq\mathbf{0}$ and a set $J \subseteq \{ 1,\ldots,m \}$ such that $\sum_{j\in J} v_j \geq (1-\varepsilon)opt$
and $\ell_j \leq a_j^Tw \leq (1+\varepsilon)u_j$ for all $j\in J$.
\end{theorem}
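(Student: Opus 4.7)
The plan is to adapt \texttt{hptas} essentially verbatim to the $\maxfs$ setting, with three adjustments: we must handle $(i)$ the arbitrary profits $v_j$ in place of the self-profit $w(D_j)$, $(ii)$ the two-sided constraint $\ell_j\le a_j^Tw\le u_j$ in place of a one-sided budget, and $(iii)$ optimal weights that can be exponential integers, which rules out the $0/1$ column-subdivision used in Section~\ref{sec:preliminaries}. I identify the $n$ columns of $A$ with highway edges and each row (with consecutive ones) with a driver $D_j$ having admissible weight interval $[\ell_j,u_j]$ and profit $v_j$. Since $A$ is an interval matrix, hence totally unimodular, for any fixed $J$ the feasibility LP $\{w\ge 0:\ell_j\le a_j^Tw\le u_j,\,j\in J\}$, when feasible, has an integer vertex; so $w^*\in\{0,\ldots,U\}^n$ with $U=\max_j u_j$. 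I do \emph{not} split columns into unit-weight pieces (that would grow to $\mathrm{poly}(U)$ rather than $\mathrm{poly}(\log U)$); the DP will operate on nonnegative integer weightings directly. I guess $W^*=\sum_i w^*_i$ as a power of $\gamma=(1/\eps)^{1/\eps}$ by appending zero-weight dummy columns ($O(\log(nU))$ guesses), and the bounding phase~(B) is retained with random shifts $x\in\{1,\ldots,W^*\}$ and $y\in\{1,\ldots,1/\eps\}$.

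The dynamic program of Figure~\ref{fig:ptas} needs two modifications. First, in the Bellman recursion~(D2) the bonus term becomes
\[
\sum_{\substack{D_j\subseteq P,\ n_j\ge\delta,\\(1-\eps)\ell_j\le (W/\gamma)\,n_j\le u_j}} v_j,
\]
so the profit of a good driver is its own $v_j$ (not $(W/\gamma)n_j$) and the admissibility check is two-sided and slightly relaxed below. Second, the base case~(D1) enumerates all nonnegative integer compositions of the constant weight $W=(1/\eps)^y$ over the edges of $P$, giving $|P|^{O((1/\eps)^{1/\eps})}$ configurations. In the scaling phase~(S) I would \emph{inflate} the DP weights by $1/(1-\eps)$ rather than deflate them: any driver $D_j$ kept by the DP then satisfies $w(D_j)\ge (W/\gamma)n_j/(1-\eps)\ge\ell_j$ and $w(D_j)\le(W/\gamma)(n_j+2)/(1-\eps)\le(1+2/n_j)u_j/(1-\eps)\le(1+O(\eps))u_j$, which is the required $(1+\eps)u_j$ after rescaling $\eps$ by a constant.

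The analysis mirrors Lemmas~\ref{lem:scaling}--\ref{lem:good} and Corollary~\ref{cor:good}, with $v_j$ replacing $w^*(D_j)$ in the profit accounting. Lemma~\ref{lem:good}'s risky/not-risky log-scale argument only uses $w^*(D_j)\in[\ell_j,u_j]$ together with the random shift $x$ and so transfers unchanged. The one genuinely new verification, which I expect to be the main obstacle, is showing that for a good $D_j\in J^*$ in the optimal dissection the relaxed two-sided test actually passes: the upper inequality is immediate from $(W/\gamma)n_j=w^*(D^s_j)\le w^*(D_j)\le u_j$, while the lower inequality requires $w^*(D^s_j)\ge(1-2/n_j)w^*(D_j)\ge(1-\eps)\ell_j$, which forces $\delta=\Theta(1/\eps)$ (slightly larger than the highway's $1/(2\eps)$ because we now also pay on the lower side). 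Combining the adapted Lemmas~\ref{lem:scaling} and~\ref{lem:good} then gives $E[apx_O]\ge(1-O(\eps))opt$.

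Finally, derandomization faces a subtlety not present in the highway case: $W^*$ may be exponential in the input size, so one cannot enumerate $x\in\{1,\ldots,W^*\}$ directly. However, the failure probability for each driver~$j$ in Lemma~\ref{lem:good} depends on $x$ only through $x\bmod\alpha_{q-1}$ for the single level~$q$ associated to $w^*(D_j)$, and is thus an efficiently computable function of any partial assignment of the bits of $x$. Applying the method of conditional expectations bit-by-bit over the $O(\log W^*)=O(\log(nU))$ bits of $x$, combined with exhaustive enumeration of the $1/\eps$ choices of $y$, yields a deterministic algorithm running in polynomial time in $n$, $m$, and $\log\max_j u_j$, as claimed.
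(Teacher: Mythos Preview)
Your proposal has a genuine gap at the dissection step. In Step~(D2) the dynamic program enumerates partitions $\overline{P}\in\mathcal{P}_\gamma(P)$ whose break points are \emph{vertices} of $G_0$; for the lower bound $apx_D\ge apx_O$ you need the optimal dissection induced by $w^*$ to be one such partition. In the highway case this is guaranteed because $w^*\in\{0,1\}^E$, but here $w^*(e)$ can be any integer up to $\ell_{\max}$, so the cumulative-weight points $kW'/\gamma$ may fall strictly inside an edge. Your proposal explicitly refuses to subdivide edges, so there is no valid witness dissection in general, and the analogue of Corollary~\ref{lem:comparison} fails. The same problem bites the random shift $x$: if you implement $x$ by putting weight $x$ on a single dummy column (you cannot attach $x$ separate unit edges, since $x$ is exponential), then already the level-$1$ break points lie inside that one edge.

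The paper's fix is precisely the idea you dismissed, but in a much milder form: it replaces each edge by only $\gamma\cdot\ell\cdot m$ copies, which is polynomial in $n,m,\log\ell_{\max}$ since $\ell=O(\log(n\ell_{\max}))$. The point is that one does not need a break point between \emph{every} unit of weight, only at the nodes of the dissection tree that are actually used; truncating the tree at paths containing no driver leaves at most $\gamma\ell m$ nodes (each of the $m$ drivers lies in at most $\ell$ paths, and each non-empty node has at most $\gamma$ children), so $\gamma\ell m$ copies per edge suffice. With this in place the paper also sidesteps your derandomization-of-$x$ issue entirely: $x$ is moved into the analysis (it shifts weight on the dummy edges of a hypothetical $w^*$), so the algorithm only randomizes over the $1/\eps$ choices of $y$, and brute-force enumeration suffices. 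Your handling of the profits $v_j$, the two-sided relaxation, and the scaling direction are all fine and close to what the paper does.
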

By standard arguments, one can round the profits $v_j$ such that
they become integers between $0$ and $m/\varepsilon$. Then each constraint $j$
can be replaced by $v_j$ many constraints with unit profit. 
Choosing $\varepsilon$ accordingly smaller and scaling the weight function by $1 + O(\varepsilon)$,
it suffices to find a solution $w$ that satisfies 
$opt/(1+O(\varepsilon))$ many constraints approximately, i.e.
$\ell_j/(1 + O(\varepsilon)) \leq a_j^Tw \leq u_j(1 + O(\varepsilon))$.

It is maybe easier to think of $\maxfs$ as a variant of the highway problem where: (1) the consecutive $1$'s in each row $j$ define a driver $D_j$ in a line graph $G=(V,E)$, (2) each driver $D_j$, besides having a budget $\budget_j=u_j$, also has a minimum amount of money $\ell_j$ that she wants to spend, and (3) the goal now is maximizing the number of \emph{satisfied} drivers who take the highway (rather than maximizing the profit). Here $w$ can be interpreted as a vector of weights.

Let $OPT = (w^*, \mathcal{D}^*)$ be the optimal solution and define $W^* := \sum_{e\in E} w^*(e)$. 
Abbreviate $\ell_{\max}:=\max\{ \ell_j \mid j=1,\ldots,m\}$. Observe that w.l.o.g. $w^*(e) \leq \ell_{\max}$ 
on all edges. Hence, $W^* \leq n\cdot\ell_{\max}$. 
Since interval matrices are totally unimodular, we can also assume 
that $w^*(e) \in \Z_{\geq 0}$ for all $e\in E$. 
By adding a dummy edge to the left of the line graph (i.e., a zero column 
to the left of the matrix), we can assume that $W^* = (1/\eps)^{\ell/\eps}$ for some $\ell\in\N$. 
We also attach a dummy edge to the right of the graph.


Furthermore recall that for the highway PTAS we duplicate edges in order to obtain $0/1$ weights. The goal is guaranteeing that we can partition the total optimal weight in $\gamma^i$ pieces, $i=1,\ldots,\ell$, without splitting any edge. This is not possible here due to the fact that optimal edge weights are not necessarily polynomially bounded. However, it is sufficient to duplicate each edge $\gamma \cdot \ell \cdot m$ times to achieve the same goal\footnote{The same approach can be used in the highway problem as well, though it is not crucial to obtain a polynomial running time in that case.} (see Appendix~\ref{appendix:MaxFS} for a proof).
Altogether, we obtain a \emph{well-rounded} instance $G_0$ with the following properties:
(1) between any two nodes that are starting point or end point of some driver, one has
at least $\gamma\cdot\ell\cdot m$ edges; (2) the weight of the optimal solution
is a power of $(1/\varepsilon)^{1/\varepsilon}$; (3) at both endings of the highway we have $\gamma\cdot\ell\cdot m$
many edges that are not used by any driver.

Our algorithm applies for such well-rounded instances and begins by
guessing $W^*$. Since $W^*$ is a power of $(1/\eps)^{1/\eps}$, 
there are at most a polynomial number of candidate values.
Recall that the randomization in the algorithms before was used to create a new
probabilistic optimal solution. The careful reader might have noticed that 
the random choice of $x$ can also be moved to the analysis. 
To simplify a later derandomization, in the algorithm we only choose  $y\in\{ 1,\ldots, 1/\eps\}$ uniformly at random and approximate a solution that 
installs a total weight of $W' = (1/\varepsilon)^y\cdot W^*$ on the edges.


For any subpath $P \subseteq G_0$, we compute table entries $\phi(P,W)$ over all weight assignments $w:P \to \Z_{\geq 0}$, with $w(P) = W$, and over all possible dissections of $P$, with the goal of maximizing the number of drivers $D_j$ such that: (1) $D_j$ is fully contained in $P$, (2) $D_j$ is good in the same sense as in the highway case, and (3) $\ell_j/(1+4\eps) \leq w(D_j^s) \leq u_j$ (the shortened driver is \emph{approximately satisfied}). 
The number of such table entries is bounded by a polynomial in $n, m$ and $\log \ell_{\max}$, since we only consider values $W$, which are of the form $W' / \gamma^i$.

Eventually we output the solution $(w,\mathcal{D}')$ that attains the value $\phi(G_0,W')$. 
Using the arguments in Lemma~\ref{lem:good} and Lemma~\ref{cor:good}, 
one can show that $E[\phi(G_0,W')] \geq (1-3\varepsilon)opt$. Similar to 
Lemma~\ref{lem:scaling}, one has $\ell_j/(1+4\varepsilon) \leq w(D_j) \leq u_j(1+4\varepsilon)$ for any $D_j \in \mathcal{D}'$.
Theorem~\ref{thm:MaxFSGeneralPTAS} then follows (see Appendix~\ref{appendix:MaxFS} for more details).

\newpage



\appendix

\section{Tollbooth with a Constant Number of Leaves\label{appendix:TollboothConstantNumOfLeaves}}

A detailed description of the algorithm is given in Figure \ref{fig:tollbooth}.
By $\mathcal{P}_{\gamma}(T)$ we denote the set of potential dissections of subtree $T$ into a forest $\overline{T}=\{T_1,\ldots,T_{q(\overline{T})}\}$. Observe that each source-sink path of $T$ can contain at most $\gamma-1$ break-points. Consequently, the number $q(\overline{T})$ of subtrees in each candidate forest is at most $\gamma\cdot \theta$. It follows that the cardinality of $\mathcal{P}_{\gamma}(T)$ is polynomially bounded when the number $\theta$ of leaves of $G$ is constant.

\begin{Figure}[th]
\underbar{\bf Input:} Well-rounded tollbooth instance $G=(V,E)$ and $(D_j,\budget_j)$, $j=1,2,\ldots,m$.

\noindent\underbar{\bf Output:} Edge weights $w:E\to \Q_{\geq 0}$.

\noindent\underbar{\bf Algorithm:}
\begin{itemize} \itemsep0pt
\item[\bf (B)] {\bf Bounding Phase:}
\begin{itemize}
\item[\bf (B1)] Guess the value of the weight $\tilde{W} = \gamma^\ell$, $\ell\in\N$.
\item[\bf (B2)] Choose integers $x\in \{1,2,\ldots,\tilde{W}\}$ and $y\in \{1,2,\ldots,1/\eps\}$ uniformly at random. Attach a path of length $\tilde{W}\cdot((1/\eps)^{y}-1)-x$  to each sink of $G$, and a path of length $x$ to the source of $G$. Let $G_0$ be the resulting tree, and $W'=\tilde{W}\cdot(1/\eps)^{y}$.
\end{itemize}
\item[\bf (D)] {\bf Dynamic Programming Phase:}
\begin{itemize}
\item[\bf (D1)] For every subtree $T\subseteq G_0$, 
$$
\phi(T,(1/\eps)^y)=\max_{\substack{w:T\to \{0,1\}\\w(T)=(1/\eps)^y}}\sum_{\substack{D_j\subseteq T,\\ w(D_j)\leq \budget_j}}w(D_j).
$$
\item[{\bf (D2)}] For every subtree $T \subseteq G_0$, $W=W'/\gamma^q$, and $q=\ell-1,\ell-2,\ldots,0$,
$$
\phi(T,W)=\max_{\overline{T}\in \mathcal{P}_{\gamma}(T)} \Bigg\{ \sum_{i=1}^{q(\overline{T})} \phi\left(T_i,W/\gamma\right)  + \sum_{\substack{D_j\subseteq T\\ n_j:=|\{i:\;D_j\text{ crosses }T_i\}|\geq \delta\\W/ \gamma \cdot n_j\leq \budget_j}} W/\gamma\cdot n_j   \Bigg\}
$$
\end{itemize}
\item[\bf (S)] {\bf Scaling Phase:}
\begin{itemize}
\item[\bf (S1)] Derive $w':G_0\to \{0,1\}$ determining the value of $\phi(G_0,W')$.
\item[\bf (S2)] Output $w:E\to \Q_{\geq 0}$, where $w(e)=w'(e)\cdot \frac{\delta}{\delta+4}$. 
\end{itemize}
\end{itemize}
\caption{PTAS for the tollbooth problem for a constant number of leaves. Here $\delta:=1/(2\eps)\in \N$ and $\gamma=(1/\eps)^{1/\eps}$.} \label{fig:tollbooth}
\end{Figure}

\begin{proof} {\em (Theorem \ref{thr:tollbooth})}
Consider the randomized algorithm described in Figure \ref{fig:tollbooth}: this algorithm can be derandomized by considering all the possible values of random variables $x$ and $y$. Assume $0<\eps\leq \frac{1}{8}$ without loss of generality.

Like in the highway case, let us restrict our attention to the dissection corresponding to the optimal weights, and let us discard drivers which do not provide any profit in the optimal solution. 

We start by showing that any residual driver $D_j$ is good with probability at least $1-3\eps$. Let us call a driver $D_j$ \emph{straight} if it lays along a source-sink path of $G$, and \emph{bent} otherwise. By exactly the same argument as in the highway case, a straight path is good with probability at least $(1-3\eps)$. Hence consider a bent driver $D_j$, and let 
$D'_j$ and $D''_j$ be the two straight subpaths which partition $D_j$.
Paths $D_j'$ and $D_j''$ have a common endpoint, which is the node of $D_j$ which is closest to the sink of $G$.  
Without loss of generality, $w^*(D'_j)\geq w^*(D''_j)$. With the same notation as in the highway case, and by a similar argument, with probability at least $1-2\eps$, there is a $q$ such that $\frac{1}{\eps}\alpha_q\leq w^*(D'_j)\leq \eps\alpha_{q-1}$. When this happens, $D_j'$ is at level $q$ in the dissection with probability at least $1-\eps$. Conditioning on the latter event, by the way the dissection is constructed and being $w^*(D''_j)\leq w^*(D'_j)$, $D''_j$ is at level not smaller than $q$ in the dissection. This implies that $D_j$ is at level $q$ as well. We can conclude that $D_j$ crosses at least $\frac{1}{\eps}-4\geq \frac{1}{2\eps}=\delta$ many level $q$ subtrees. The $-4$ here comes from the fact that the portion of $D_j$ not crossing any subtree consists of at most $4$ source-sink subpaths ($2$ for $D'_j$ and $2$ for $D''_j$, if $D_j$ is bent). Altogether, $D_j$ is good with probability at least $1-3\eps$.

Given that $D_j$ is good, the portion of $D_j$ crossing subtrees at level $q+1$ has weight at least $\frac{\delta}{\delta+4}w^*(D_j)$. This is by the same argument as above. Furthermore, the budget of $D_j$ in the dynamic program is violated at most by a factor 
$\frac{\delta+4}{\delta}$: hence scaling the weights by $\frac{\delta}{\delta+4}$ in Step (S2) guarantees that good paths contribute to the actual profit. Considering that the initial rounding introduces a factor $1+\eps$ in the approximation, altogether the solution produced by the algorithm gives profit at least $(\frac{\delta}{\delta+4})^2 \cdot\frac{1-3\eps}{1+\eps}opt=\frac{1-3\eps}{(1+8\eps)^2(1+\eps)}opt$ in expectation.
\end{proof}

\newpage


\section{Maximum-Feasible Subsystem for Interval Matrices}
\label{appendix:MaxFS}

\begin{Figure}[t]
\underbar{\bf Input:} Well-rounded $\maxfs$ instance $G_0=(V,E)$ and $(P_j,\ell_j,u_j)$, $j=1,2,\ldots,m$.

\noindent\underbar{\bf Output:} Edge weights $w:E\to \Z_{\geq 0}$, drivers $\mathcal{D}' \subseteq \mathcal{D}$. 

\noindent\underbar{\bf Algorithm:}
\begin{itemize*} 
  \item[\bf (B)] {\bf Bounding Phase:}
  \begin{itemize}
    \item[\bf (B1)] Guess the value of the total weight $W^* = \gamma^\ell$, $\ell\in\N$.
    \item[\bf (B2)] Choose $y \in \{ 1,\ldots,1/\varepsilon\}$ uniformly at random. Define $W' = W^*\cdot (1/\varepsilon)^y$.
  \end{itemize}
  \item[\bf (D)] {\bf Dynamic Programming Phase:}
  \begin{itemize}
    \item[\bf (D1)] For every path $P\subseteq G_0$, 
\[
\phi(P,(1/\eps)^y)=\max_{\substack{w:P\to \Z_{\geq 0}\\w(P)=(1/\eps)^y}} \big|\big\{ D_j\subseteq P \mid \ell_j/(1+4\varepsilon) \leq w(D_j) \leq u_j \big\}\big|.
\]
For every path $P$ with no $D_j \subseteq P$, define $\phi(P,W) = 0$ for any $W=W'/\gamma^q$, $q=0,\ldots,\ell-1$.
    \item[{\bf (D2)}] For every path $P \subseteq G_0$, and for $W=W'/\gamma^q$, $q=\ell-1,\ell-2,\ldots,0$,
\[
\phi(P,W)=\max_{\overline{P}\in \mathcal{P}_{\gamma}(P)} \Bigg\{ \sum_{i=1}^{\gamma} \phi\left(P_i,W/\gamma\right)  + \Big|\Big\{ D_j\subseteq P \mid \substack{n_j:=| \{ i : P_i \subseteq D_j\}|\geq \delta,\\ \ell_j/(1+4\varepsilon) \leq \frac{W}{\gamma}\cdot n_j \leq u_j} \Big\}\Big|   \Bigg\}.
\]
  \end{itemize}
\item[{\bf (O)}] {\bf Output Phase:}
  \begin{itemize}
  \item[{\bf (O1)}] Derive $w:E\to \Z_{\geq 0}$ and $\mathcal{D}'\subseteq\mathcal{D}$ determining the value of $\phi(G_0,W')$. 
  \item[{\bf (O2)}] Output $(w,\mathcal{D}')$. 
  \end{itemize}
\end{itemize*}
\caption{PTAS with $1+O(\varepsilon)$-violation for the $\maxfs$ problem. Here $\delta:=1/(2\eps)\in \N$ and $\gamma=(1/\eps)^{1/\eps}$.} \label{fig:maxfsptas}
\end{Figure}

Recall that a driver $D_j$ \emph{belongs} to a path $P$ in a dissection, if $P$ is the maximal
path with $D_j \subseteq P$. Suppose the driver $D_j$ indeed belongs to $P$ and 
the dissection splits $P$ into $\overline{P} = \{ P_1,\ldots,P_{\gamma}\}$. Then $D_j$ is termed \emph{good} if
the number of $P_i$'s with $P_i \subseteq D_j$ is at least $\delta = \frac{1}{2\varepsilon}$. 

The algorithm in Figure~\ref{fig:maxfsptas} computes table entries
$\phi(P,W)$ representing the maximum number of good drivers $D_j \subseteq P$ that can be
approximately satisfied under the constraint $w(P) = W$.
The main difference to the previous algorithms is that, if we reach a path $P$
not containing any driver $D_j$, then we define $\phi(P,W) = 0$. 
First note that the number of table entries is bounded by a polynomial in $n$ and $\log \ell_{\max}$. Hence, 
the table entries can be computed in time $poly(n,m,\log \ell_{\max})$.

Next, we argue why the value of the computed table entry is not much worse in expectation
than the optimal number of satisfiable drivers.
\begin{lemma}
The final table entry satisfies $E[\phi(G_0,W')] \geq (1-3\varepsilon)opt$.
\end{lemma}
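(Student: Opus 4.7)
The plan is to mimic the highway analysis, adapting it for the counting (rather than weighted-profit) objective and for the additional lower-bound constraint $\ell_j$. Following Corollary~\ref{lem:comparison}, I first push the random shift $x \in \{1,\ldots,W^*\}$ back into the analysis, even though the algorithm in Figure~\ref{fig:maxfsptas} only samples $y$. For a fixed realisation of $(x,y)$, the optimal weight function $w^*$ together with $x$ defines an ``optimal dissection'' of $G_0$: recursively cut each level-$q$ subpath at the unique breakpoints where $w^*$ has accumulated $W'/\gamma^{q+1}$ mass, using the $x$-shift only at the top level. Because Step (D2) maximises over all partitions in $\mathcal{P}_\gamma(P)$ at every recursion depth, the table entry $\phi(G_0,W')$ is at least the value $\phi_O$ obtained by running the DP recurrence on this particular dissection while using $w^*$ as the weight function. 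It therefore suffices to lower bound $E[\phi_O]$.

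Fix any $D_j \in \mathcal{D}^*$ and call it \emph{good} (with respect to the optimal dissection) exactly as in the highway case: $D_j$ is fully contained in some level-$(q-1)$ subpath and contains at least $\delta = 1/(2\varepsilon)$ many level-$q$ subpaths. The proof of Lemma~\ref{lem:good} uses only that $w^*(D_j)$ lies in a fixed interval and that the thresholds $\alpha_q$ are simultaneously shifted by $(x,y)$; it is completely independent of whether the objective counts drivers or sums their weights. Hence each $D_j \in \mathcal{D}^*$ is good with probability at least $1 - 3\varepsilon$.

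Next I argue that whenever $D_j$ is good, the shortened driver $D_j^s$ automatically satisfies the approximate constraint $\ell_j/(1+4\varepsilon) \leq w^*(D_j^s) \leq u_j$, so $D_j$ is counted in $\phi_O$. The upper bound is immediate: $w^*(D_j^s) \leq w^*(D_j) \leq u_j$ because $D_j \in \mathcal{D}^*$. For the lower bound I reuse the calculation from Lemma~\ref{lem:scaling}. Writing $W/\gamma$ for the weight $w^*$ places on each level-$q$ subpath, the shortened driver carries exactly $n_j \cdot W/\gamma$, while the two boundary pieces add at most $2 \cdot W/\gamma$, so
\[
w^*(D_j^s) = n_j\cdot \frac{W}{\gamma} \geq \frac{n_j}{n_j+2}\, w^*(D_j) \geq \frac{\delta}{\delta+2}\, w^*(D_j) = \frac{1}{1+4\varepsilon}\, w^*(D_j) \geq \frac{\ell_j}{1+4\varepsilon},
\]
where the last inequality uses $D_j \in \mathcal{D}^*$.

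Combining the two observations, each $D_j \in \mathcal{D}^*$ is counted in $\phi_O$ with probability at least $1 - 3\varepsilon$, and by linearity of expectation
\[
E[\phi(G_0,W')] \geq E[\phi_O] \geq \sum_{D_j \in \mathcal{D}^*} \Pr[D_j\text{ good}] \geq (1-3\varepsilon)\,|\mathcal{D}^*| = (1-3\varepsilon)\, opt,
\]
using that after the unit-profit reduction at the start of the section, $opt$ equals the cardinality $|\mathcal{D}^*|$. The only real subtlety is bookkeeping: confirming that the analysis-only shift $x$ induces a dissection that the DP actually considers (it does, for free, since the DP tries all $\gamma$-partitions at every level), and checking that the new lower-bound constraint $\ell_j$ needs no separate probabilistic argument beyond ensuring $D_j^s$ is not shortened too much, which is precisely what ``good'' enforces.
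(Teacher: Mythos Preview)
Your argument tracks the paper's approach closely for the probabilistic part (pushing the random shift $x$ into the analysis, invoking Lemma~\ref{lem:good}, and checking that a good driver's shortened version satisfies $\ell_j/(1+4\varepsilon)\le w^*(D_j^s)\le u_j$). That part is fine and matches the paper.

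However, there is a genuine gap. You write that $(x,y)$ together with $w^*$ ``defines an optimal dissection of $G_0$: recursively cut each level-$q$ subpath at the unique breakpoints where $w^*$ has accumulated $W'/\gamma^{q+1}$ mass,'' and later dismiss the issue by saying the DP ``tries all $\gamma$-partitions at every level.'' The DP only tries partitions at \emph{nodes} of $G_0$, and unlike in the highway PTAS, $w^*$ is \emph{not} $0/1$-valued here: optimal edge weights may be as large as $\ell_{\max}$, so a required breakpoint may fall strictly inside an edge. The very reason the well-rounded instance duplicates each original edge $\gamma\cdot\ell\cdot m$ times (rather than $w^*(e)$ times, which could be exponential) is to manufacture enough nodes for the dissection; but one must still argue that this many duplicates suffices.

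The paper supplies exactly this missing argument: consider the full unit-weight expansion $G^*$ and its induced dissection, then \emph{truncate} the dissection tree at empty subpaths (those containing no driver). Since each of the $m$ drivers lies in at most $\ell$ paths of the tree and every surviving empty leaf has a non-empty parent, the truncated tree has at most $\gamma\cdot\ell\cdot m$ paths in total. Hence the number of breakpoints needed is at most $\gamma\cdot\ell\cdot m$, and because each original edge was split into that many segments in $G_0$, the truncated dissection can be realised at nodes of $G_0$. Without this step your lower bound $\phi(G_0,W')\ge\phi_O$ is not justified, since $\phi_O$ may correspond to a dissection the DP cannot enumerate. A secondary omission is that you never say how $w^*$ becomes a weight function of total value $W'$ on $G_0$; the paper does this by loading the extra weight $W^*((1/\varepsilon)^y-1)-x$ and $x$ onto the dummy edges at the two ends.
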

\begin{proof}
Let $w^* : E \to \Z_{\geq 0}$ be the optimal weight function of total weight $W^*$. 
Recall that we have inserted
dummy edges to the left and to the right, not contained in any driver $D_j$.
We choose an integer $x\in \{1,2,\ldots,W^*\}$ uniformly at random. 
Then increase the total weight on the dummy edges to the right by  $W^*\cdot((1/\eps)^{y}-1)-x$ and the weight on the dummy edges to the left by $x$. The total weight
of $w^*$ is now indeed $W'=W^*\cdot(1/\eps)^{y}$.
It now suffices to show the promised bound on $E[\phi(G_0,W')]$ over the random
choices of $y$ and $x$.

Recall that for $\maxfs$ we could not assume that all edges carry just unit weight.
Hence we need to argue, why there still is a proper dissection induced by $w^*$,
when each edge is replaced by just $\gamma \cdot \ell \cdot m$ many edge segments.
To see this, imagine the line graph $G^*$, which indeed emerges from replacing any edge $e$ 
by $w^*(e)$ many edges. 
As in previous sections, there is a proper dissection induced by $w^*$ --- potentially with an exponential
number of leaves. We think of this dissection to be constructed in a top-down
fashion, where the dynamic program truncates the dissection at \emph{empty} paths, that do not
contain any driver. 
How many paths (or nodes in the dissection tree) can this truncated dissection have?
Any of the $m$ drivers is fully contained in not more than $\ell$ many paths (which is
the depth of the dissection tree). And any remaining empty path must have a father that
is non-empty. Hence the number of paths $P$ in the truncated dissection tree is bounded by 
$\gamma \cdot \ell \cdot m$. Since we replaced any edge in the original graph by that many edge
segments, this truncated dissection also exists in $G_0$.

Again, by Lemma~\ref{lem:good}, if we consider the (truncated) dissection of $G_0$ 
which is induced by the optimal solution, any driver $D_j$ is good with probability at least $1-3\varepsilon$.
Suppose that $D_j$ is good and satisfied in the optimal solution, i.e.  $\ell_j \leq w^*(D_j) \leq u_j$.
Then 
\[
 w^*(D_j^s) \geq \frac{\delta}{\delta+2} w^*(D_j) \geq \ell_j/(1+4\varepsilon)
\]
and of course $w^*(D_j^s) \leq w^*(D_j) \leq u_j$. In other words, $D_j$ would be included by the dynamic program.
The claim follows again by linearity of expectation.
\end{proof}

Finally we argue that the returned drivers are approximately satisfied by the
computed weight function. 
\begin{lemma}
Let $(w,\mathcal{D}')$ be the returned solution.
For every driver $D_j\in\mathcal{D}'$, one has $\ell_j/(1+4\varepsilon) \leq w(P_j) \leq u_j(1+4\varepsilon)$.
\end{lemma}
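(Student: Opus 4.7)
The plan is to mimic the bookkeeping of Lemma~\ref{lem:scaling}, but to derive a two-sided bound in place of the one-sided scaled bound used there: for a good driver, the true weight $w(D_j)$ and the shortened weight $w(D_j^s)$ differ by at most $2W/\gamma$, which is only a $(1+4\varepsilon)$-factor perturbation once we know $n_j\geq\delta=1/(2\varepsilon)$.

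First I would fix an arbitrary $D_j\in\mathcal{D}'$ and, by tracing back the decisions that realize $\phi(G_0,W')$ in Step (O1), identify the unique path $P$ in the resulting hierarchical dissection to which $D_j$ belongs. Writing $\overline{P}=\{P_1,\ldots,P_\gamma\}$ for the chosen partition of $P$ and $W=W'/\gamma^q$ for the weight attached to $P$ at that level, the eligibility condition appearing inside the maximization of Step (D2) gives
\[
n_j\geq\delta,\qquad \frac{\ell_j}{1+4\varepsilon}\leq \frac{W}{\gamma}\cdot n_j\leq u_j.
\]
Because the recurrence invokes $\phi(\cdot,W/\gamma)$ under the constraint that total weight $W/\gamma$ is installed on the child, and Step (D1) enforces $w(P)=(1/\varepsilon)^y$ at the leaves, a routine induction on dissection depth shows that the recovered weight function satisfies $w(P_i)=W/\gamma$ at every level; in particular $w(D_j^s)=n_j\cdot W/\gamma$.

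Next I would compare $w(D_j)$ with $w(D_j^s)$. The set $D_j\setminus D_j^s$ is contained in at most two boundary subpaths of $\overline{P}$, each of total weight $W/\gamma$, so $0\leq w(D_j)-w(D_j^s)\leq 2W/\gamma$. The lower bound of the lemma then follows immediately from $w(D_j)\geq w(D_j^s)\geq \ell_j/(1+4\varepsilon)$. For the upper bound I would write
\[
w(D_j)\leq \Big(1+\tfrac{2}{n_j}\Big)\,w(D_j^s)\leq(1+4\varepsilon)\,w(D_j^s)\leq(1+4\varepsilon)\,u_j,
\]
where the middle inequality uses $n_j\geq 1/(2\varepsilon)$.

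The step I expect to carry the actual content of the proof is this last $(1+2/n_j)$-bound: it is where the goodness hypothesis $n_j\geq\delta$ is used to convert the additive boundary slack $2W/\gamma$ into a multiplicative $(1+4\varepsilon)$ violation of $u_j$, and symmetrically to safeguard $\ell_j$ from below. By contrast, the induction that the traceback actually realizes $w(P_i)=W/\gamma$ at every level is routine, because the weight subscript of $\phi$ is passed to children as $W/\gamma$ and so the constraints propagate consistently all the way down to the base case in Step (D1); paths that contain no driver contribute a zero entry and can be given any weight distribution summing to the required amount without affecting the bound.
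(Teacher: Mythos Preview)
Your proof is correct and follows essentially the same approach as the paper: both argue that $w(D_j)\geq w(D_j^s)$ gives the lower bound directly, and that $w(D_j)\leq\frac{n_j+2}{n_j}\,w(D_j^s)\leq(1+4\varepsilon)u_j$ gives the upper bound via goodness. Your write-up is in fact more explicit than the paper's (you spell out the traceback, the inductive propagation of $w(P_i)=W/\gamma$, and the handling of empty leaf paths), whereas the paper simply asserts $w(D_j)\leq\frac{\delta+2}{\delta}w(D_j^s)$ without elaboration; the only small omission is that you do not separately mention drivers counted at the base case (D1), for which $\ell_j/(1+4\varepsilon)\leq w(D_j)\leq u_j$ holds directly and the conclusion is immediate.
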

\begin{proof}
Again let $D_j^s$ be the shortened driver of $D_j$ w.r.t. the dissection induced by the computed 
weight function $w$. First of all $w(D_j) \geq w(D_j^s) \geq \ell_j/(1+4\varepsilon)$. 
Next, the driver $D_j$ is good, hence 
\[
  w(D_j) \leq \frac{\delta+2}{\delta} w(D_j^s) = (1+4\varepsilon)w(D_j^s) \leq (1+4\varepsilon)\cdot u_j.
\]
\end{proof}

We observe that the above algorithm can be easily derandomized by trying
out all $1/\varepsilon$ many choices of $y$. 
In total Theorem~\ref{thm:MaxFSGeneralPTAS} follows. 



\end{document}